\newtheorem{theorem}{Theorem}[section]
\newtheorem{proposition}[theorem]{Proposition}
\newtheorem{lemma}[theorem]{Lemma}
\newtheorem{remark}[theorem]{Remark}
\numberwithin{equation}{section}
\numberwithin{figure}{section}
\newcommand{\wutilde}[1]{\vrule depth 0pt width 0pt%
{\raise0.8pt\hbox{$\smash{{\mathop{#1} \limits_{\displaystyle\widetilde{}}}}$}}}
\newcommand{\al}{\alpha}
\newcommand{\de}{\delta}
\newcommand{\ep}{\bm{\epsilon}}
\newcommand{\ka}{\kappa}
\newcommand{\bbZ}{\mathbb{Z}}
\newcommand{\bbC}{\mathbb{C}}
\newcommand{\br}[1]{{\langle{#1}\rangle}}
\long\def\@makecaption#1#2{
 \vskip 10pt
 \setbox\@tempboxa\hbox{#1. #2}
 \ifdim \wd\@tempboxa >\hsize #1. #2\par \else \hbox
to\hsize{\hfil\box\@tempboxa\hfil}
 \fi}
\newcommand{\orcidauthorA}{0000-0001-7504-4444}
\begin{document}
\title[]{Discrete Power Functions on a Hexagonal Lattice I:\\
Derivation of defining equations from the symmetry of the Garnier System in two variables}

\author{Nalini Joshi}
\thanks{NJ's ORCID ID is \orcidauthorA.}
\address{School of Mathematics and Statistics F07, The University of Sydney, NSW 2006, Australia.}
\email{nalini.joshi@sydney.edu.au}
\author{Kenji Kajiwara}
\address{Institute of Mathematics for Industry, Kyushu University, 744 Motooka, Fukuoka 819-0395, Japan.}
\email{kaji@imi.kyushu-u.ac.jp}
\author{Tetsu Masuda}
\address{Department of Mathematical Sciences, Aoyama Gakuin University, Sagamihara, Kanagawa 252-5258, Japan.}
\email{masuda@gem.aoyama.ac.jp}
\author{Nobutaka Nakazono}
\address{Institute of Engineering, Tokyo University of Agriculture and Technology, 2-24-16 Nakacho Koganei, Tokyo 184-8588, Japan.}
\email{nakazono@go.tuat.ac.jp}
\begin{abstract}
The discrete power function on the hexagonal lattice proposed by Bobenko et al is considered, whose defining equations consist of three cross-ratio equations and a similarity constraint. We show that the defining equations are derived from the discrete symmetry of the Garnier system in two variables.
\end{abstract}
\subjclass[2010]{
14H70, 
33E17, 
34M55, 
39A14 
}
\keywords{
discrete power function;
Garnier system; 
affine Weyl group;
consistent around a cube;
$\tau$ function
}
\maketitle
\setcounter{tocdepth}{1}

\allowdisplaybreaks

\section{Introduction}
The theory of discrete analytic functions based on circle packings and circle patterns was initiated by Thurston's idea of using circle packings to approximate the Riemann mapping \cite{Thurston1985}. Recently various applications of circle packings and circle patterns have been reported, for example, in probability and statistical physics \cite{Smirnov:2010, Lis2019} based on general theory (see for example \cite{Stephenson2003,Stephenson:book2005} and references therein).

A theory of discrete analytic functions based on the circle patterns generated by discrete integrable systems has been developed \cite{BobenkoA2008:MR2467378}.
Nijhoff et al. \cite{NRGO2001:MR1819383} showed that the cross-ratio equation (the discrete Schwarzian KdV equation) together with a certain similarity constraint can be regarded as a part of the B\"acklund transformations of the Painlev\'e VI equation. 
Bobenko found that this system of equations defines a discrete power function on a square lattice \cite{Bobenko99}, and its various properties have been investigated in detail \cite{Agafonov:2003,Agafonov:2005,AB:2000,Bobenko-Its, BP1999:MR1676682}. 
In this context, the cross-ratio equation can be regarded the nonlinear discretization of the Cauchy-Riemann relation.
In our previous studies \cite{AHKM:circle_patterns,JKMNS2017:MR3741826}, we constructed the explicit formulas for the discrete power functions on a square lattice in terms of the hypergeometric $\tau$-functions of the Painlev\'e VI equation, and clarified the complete characterization of the defining equations in relation to the symmetry of the Painlev\'e VI equation that is isomorphic to an extension of the affine Weyl group of type $D_4^{(1)}$. 

Another type of discrete power function defined on a hexagonal lattice has been investigated in \cite{AB2003:MR2006759,BobenkoHoffmann,BobenkoHoffmann03}. 
Consider the following system of difference equations consisting of three cross-ratio equations:
\begin{equation}\label{cr:3d}
\begin{array}{l}
\dfrac{(f_{l_1,l_2,l_3}-f_{l_1+1,l_2,l_3}) (f_{l_1+1,l_2+1,l_3}-f_{l_1,l_2+1,l_3})}
 {(f_{l_1+1,l_2,l_3}-f_{l_1+1,l_2+1,l_3})(f_{l_1,l_2+1,l_3}-f_{l_1,l_2,l_3})}
 =\dfrac{1}{x_1},\\[4mm]
\dfrac{(f_{l_1,l_2,l_3}-f_{l_1,l_2+1,l_3}) (f_{l_1,l_2+1,l_3+1}-f_{l_1,l_2,l_3+1})}
 {(f_{l_1,l_2+1,l_3}-f_{l_1,l_2+1,l_3+1})(f_{l_1,l_2,l_3+1}-f_{l_1,l_2,l_3})}
 =\dfrac{1}{x_2},\\[4mm]
\dfrac{(f_{l_1,l_2,l_3}-f_{l_1,l_2,l_3+1}) (f_{l_1+1,l_2,l_3+1}-f_{l_1+1,l_2,l_3})}
 {(f_{l_1,l_2,l_3+1}-f_{l_1+1,l_2,l_3+1})(f_{l_1+1,l_2,l_3}-f_{l_1,l_2,l_3})}
 =\dfrac{1}{x_3},
\end{array}
\end{equation}
and a similarity constraint:
\begin{equation}\label{eqn_of_power_fnct}
\begin{array}{ll}
\al^0_0\, f_{l_1,l_2,l_3}\!\!\!
&=(l_1-\al^1_1)\dfrac{(f_{l_1+1,l_2,l_3}-f_{l_1,l_2,l_3})(f_{l_1,l_2,l_3}-f_{l_1-1,l_2,l_3})}{f_{l_1+1,l_2,l_3}-f_{l_1-1,l_2,l_3}}\\[4mm]
&+(l_2-\al^2_1)\dfrac{(f_{l_1,l_2+1,l_3}-f_{l_1,l_2,l_3})(f_{l_1,l_2,l_3}-f_{l_1,l_2-1,l_3})}{f_{l_1,l_2+1,l_3}-f_{l_1,l_2-1,l_3}}\\[4mm]
&+(l_3-\al^3_1)\dfrac{(f_{l_1,l_2,l_3+1}-f_{l_1,l_2,l_3})(f_{l_1,l_2,l_3}-f_{l_1,l_2,l_3-1})}{f_{l_1,l_2,l_3+1}-f_{l_1,l_2,l_3-1}}. 
\end{array}
\end{equation}
Here, $l_1,l_2,l_3$ are integers, and $x_i$, $i=1,2,3$, $\al^j_1$, $j=1,2,3$, and $\al^0_0$ are complex parameters satisfying $x_1x_2x_3=1$. 
A solution $f_{l_1,l_2,l_3}$ of equations \eqref{cr:3d} and \eqref{eqn_of_power_fnct}, which satisfies the initial conditions
\begin{equation}\label{initial_conditions}
f_{1,0,0} =1,\quad 
f_{0,1,0}=e^{c(a_2+a_3)\sqrt{-1}},\quad 
f_{0,0,-1}=e^{ca_3\sqrt{-1}},
\end{equation}
with the identification $x_j=e^{2a_j\sqrt{-1}}\,(a_j>0,\,a_1+a_2+a_3=\pi)$, $j=1,2,3$, and $\al_0^0=c/2$\, $(0<c<2)$, $\al_1^1=\al_1^2=\al_1^3=0$, is associated with a hexagonal circle pattern \cite[Definition 1]{AB2003:MR2006759}. 
For a given set of integers $(l_1,l_2,l_3)$ with $l_1+l_2+l_3=0$, the six points $f_{l_1\pm 1,l_2,l_3}$, $f_{l_1,l_2\pm 1,l_3}$ and $f_{l_1,l_2,l_3\pm 3}$ lie on a circle with the center at  $f_{l_1,l_2,l_3}$, which gives rise to a circle pattern.  
See Figure \ref{fig:hexxagonal} for an example with $a_i=\pi/3$, $i=1,2,3$, and $c=3/2$.
We refer to such a solution as the discrete power function on a hexagonal lattice. Also, equations \eqref{cr:3d} and \eqref{eqn_of_power_fnct} are referred to as its defining equations. 

\begin{figure}[H]
\begin{center}
\includegraphics[width=0.5\textwidth]{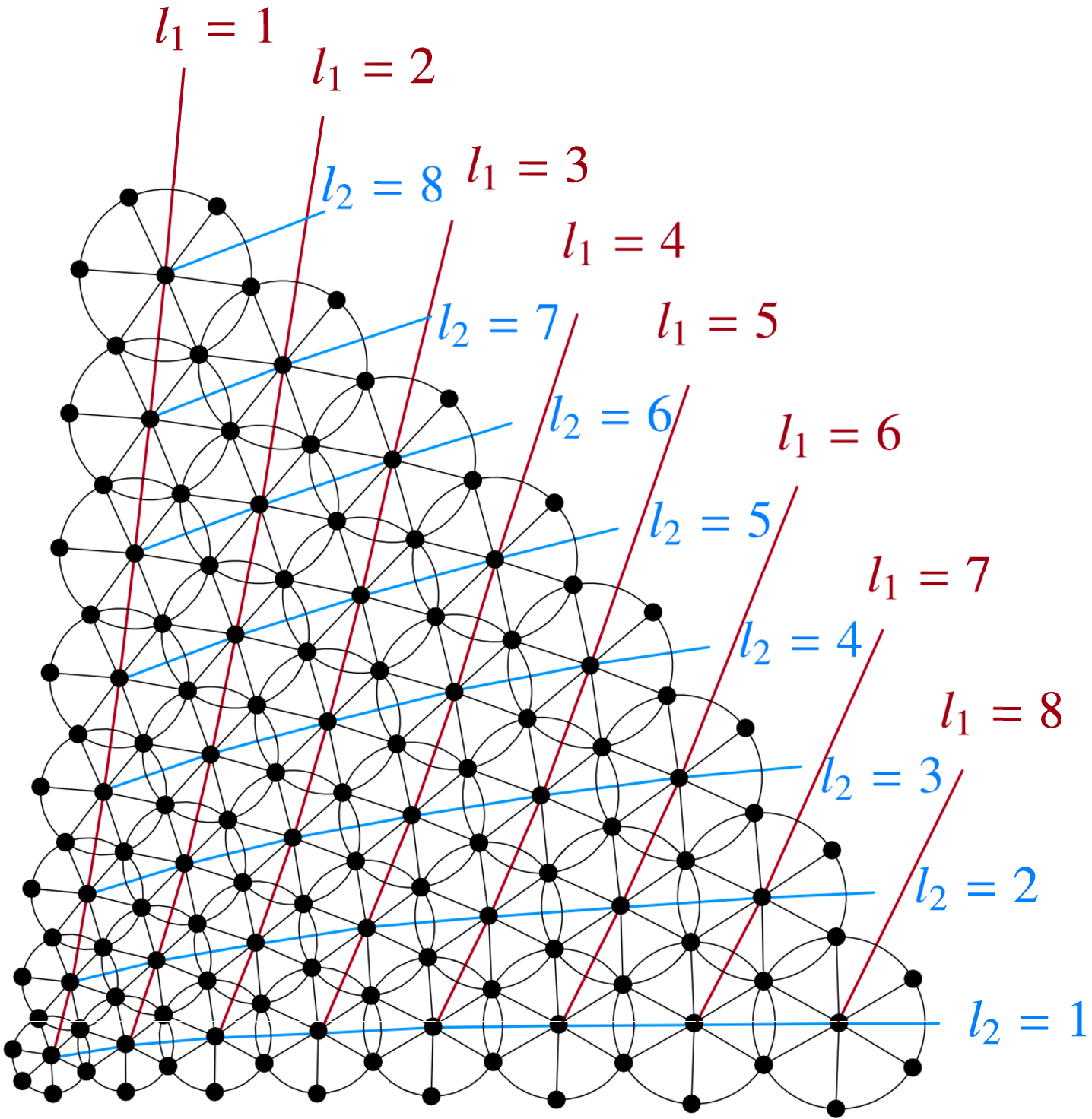}
\qquad
\includegraphics[width=0.4\textwidth,trim=5 100 50 50]{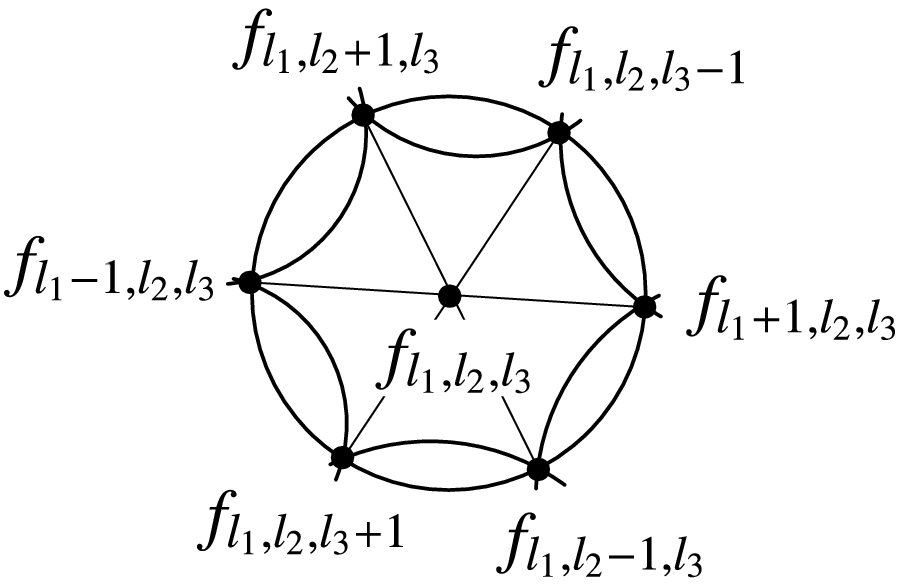}
\end{center}
\caption{A hexagonal circle pattern with $a_1=a_2=a_3=\pi/3$ and $c=3/2$. Left: circle pattern with the coordinate lines. Right: configuration of the lattice points around $f_{l_1,l_2,l_3}$ with $l_1+l_2+l_3=0$. }
\label{fig:hexxagonal}
\end{figure}

The main purpose of this paper is to show that the defining equations of the discrete power functions on a hexagonal lattice \eqref{cr:3d} and \eqref{eqn_of_power_fnct} arise from the discrete symmetry of the Garnier system in two variables (see \cite{KO1984:MR776915,SuzukiT2005:MR2177118} for details of the Garnier system). 
In the next publication, we plan to construct the explicit formulas
for the discrete power functions on the hexagonal lattice in terms of the hypergeometric $\tau$ functions of the Garnier system.

This paper is organized as follows.  
In Section \ref{section:Garnier}, we formulate the discrete symmetry of the Garnier system in two variables, and derive some bilinear relations for the so called $\tau$-variables. 
In Section \ref{Section:diffence equations}, we derive a system of partial difference equations with some constraints from the bilinear relations. As a sub-system of the difference equations, we obtain the equations \eqref{cr:3d} and \eqref{eqn_of_power_fnct}. 
Concluding remarks are then given in Section \ref{ConcludingRemarks}. 
It is known that integrable difference equations on a hexagonal lattice can be reduced from equations on a multidimensional quadrilateral lattice \cite{Doliwa et al 2007}. 
In Appendix \ref{section:ABS_Garnier}, we consider a multi-dimensionally consistent system of partial difference equations, and show that it is reduced to the system of partial difference equations discussed in Section \ref{Section:diffence equations}.

\section{Discrete symmetry of the Garnier system in two variables}\label{section:Garnier}
In this section, we first formulate the discrete symmetry of the Garnier system in two variables. Then we introduce $\tau$-variables through the Hamiltonians of the Garnier system in two variables and derive several bilinear relations among the $\tau$-variables.

\subsection{Garnier system in two variables and its symmetry}
The Garnier system in two variables\,\cite{Garnier, IKSY, KO1984:MR776915} is equivalent to the Hamiltonian system 
\begin{equation}
\dfrac{\partial q_j}{\partial t_i}=
\dfrac{\partial H_i}{\partial p_j},\quad
\dfrac{\partial p_j}{\partial t_i}=
-\dfrac{\partial H_i}{\partial q_j},\quad(i,j=1,2),
\end{equation}
with the Hamiltonians 
\begin{equation}\label{Garnier_hamiltonian}
\begin{array}{ll}
t_i(t_i-1)H_i\!\!\!\!
&=q_i(q_1p_1+q_2p_2+\al)(q_1p_1+q_2p_2+\al+\ka_{\infty})
+t_ip_i(q_ip_i-\theta_i)\\[1mm]
&-\dfrac{t_j(t_i-1)}{t_i-t_j}(q_jp_j-\theta_j)q_ip_j
-\dfrac{t_i(t_i-1)}{t_i-t_j}(q_ip_i-\theta_i)q_jp_i\\[3mm]
&-\dfrac{t_i(t_j-1)}{t_j-t_i}q_jp_j(q_ip_i-\theta_i)
-\dfrac{t_i(t_j-1)}{t_j-t_i}q_ip_i(q_jp_j-\theta_j)\\[3mm]
&-(t_i+1)(q_ip_i-\theta_i)q_ip_i+(\ka_1t_i+\ka_0-1)q_ip_i,
\quad(i,j)=(1,2),~(2,1),
\end{array}
\end{equation}
where $\al=-\dfrac{1}{2}(\theta_1+\theta_2+\ka_0+\ka_1+\ka_{\infty}-1)$. 

This system admits the following birational canonical transformations \cite{IKSY,Kimura} 
\begin{equation}
\begin{array}{lll}
\sigma_{12}&:&
\theta_1\leftrightarrow\theta_2,\quad t_1\leftrightarrow t_2,\quad
q_1\leftrightarrow q_2,\quad p_1\leftrightarrow p_2,\\[1mm]
\sigma_{23}&:&
\theta_2\leftrightarrow\ka_0,\quad
t_1\mapsto\dfrac{t_2-t_1}{t_2-1},\quad
t_2\mapsto\dfrac{t_2}{t_2-1},\\[3mm]&&
q_1\mapsto\dfrac{t_2-t_1}{t_1(t_2-1)}q_1,\quad
q_2\mapsto\dfrac{t_2}{1-t_2}(g_t-1),\\[3mm]&&
p_1\mapsto\dfrac{t_2-1}{t_2-t_1}(t_1p_1-t_2p_2),\quad
p_2\mapsto(1-t_2)p_2,\\[3mm]
\sigma_{34}&:&
\ka_0\leftrightarrow\ka_1,\quad t_i\mapsto\dfrac{1}{t_i},
\quad q_i\mapsto\dfrac{q_i}{t_i},\quad p_i\mapsto t_ip_i,\\[3mm]
\sigma_{45}&:&
\ka_1\leftrightarrow\ka_{\infty},
\quad t_i\mapsto\dfrac{t_i}{t_i-1},\\[3mm]&&
q_i\mapsto\dfrac{q_i}{g_1-1},\quad 
p_i\mapsto(g_1-1)(p_i-\al-q_1p_1-q_2p_2),
\end{array}
\end{equation}
where 
\begin{equation}\label{def_g}
g_t=\dfrac{q_1}{t_1}+\dfrac{q_2}{t_2},\quad g_1=q_1+q_2. 
\end{equation}
Note that these transformations act on the parameters $\theta=(\theta_1,\theta_2,\ka_0,\ka_1,\ka_{\infty})$ by transposing adjacent pairs, and generate the group isomorphic to the symmetric group $\mathfrak{S}_5$. 
The following mappings
\begin{equation}
\begin{array}{lll}
r_i&:&\theta_i\mapsto-\theta_i,\quad 
p_i\mapsto p_i-\dfrac{\theta_i}{q_i}\quad(i=1,2),\\[4mm]
r_3&:&\ka_0\mapsto-\ka_0,
\quad p_i\mapsto p_i-\dfrac{\ka_0}{t_i(g_t-1)},\\[4mm]
r_4&:&\ka_1\mapsto-\ka_1,
\quad p_i\mapsto p_i-\dfrac{\ka_1}{g_1-1},\\[3mm]
r_5&:&\ka_{\infty}\mapsto-\ka_{\infty},
\end{array}
\end{equation}
are also birational canonical transformations for the Garnier system in two variables. 
These satisfy the fundamental relations 
\begin{equation}
{r_i}^2=1\quad(i=1,\ldots,5),\qquad r_ir_j=r_jr_i\quad(i\ne j),\qquad
\sigma r_i\sigma^{-1}=r_{\sigma(i)}\quad(\sigma\in\mathfrak{S}_5).
\end{equation}
Further there exists the transformation\,\cite{Tsuda1}
\begin{equation}
\begin{array}{lll}
r_{34}&:&
\theta_i\mapsto-\theta_i\,(i=1,2),
\quad\ka_0\mapsto-\ka_0+1,
\quad\ka_1\mapsto-\ka_1+1,
\quad\ka_{\infty}\mapsto-\ka_{\infty},\\[1mm]
&&q_i\mapsto\dfrac{t_ip_i(q_ip_i-\theta_i)}
{(q_1p_1+q_2p_2+\al)(q_1p_1+q_2p_2+\al+\ka_{\infty})},\quad
q_ip_i\mapsto-q_ip_i.
\end{array}
\end{equation}
We have the fundamental relations 
\begin{equation}
{r_{34}}^2=1,\qquad r_ir_{34}=r_{34}r_i\quad(i\ne3,4),
\end{equation}
and 
\begin{equation}
r_{34}\sigma_{34}=\sigma_{34}r_{34},\quad
r_{34}\sigma_{ab}=\sigma_{ab}r_{34}\,(a,b\ne3,4),
\end{equation}
where $\sigma_{ab}=\sigma_{ba}\in\mathfrak{S}_5$ is a transposition for mutually distinct indices $a,b\in\{1,\ldots,5\}$.
\begin{remark}
The group generated by $r_i$ ($i=1,\ldots,5$), $\sigma_{ab}$ ($a,b\in \{1,2,\ldots,5\}$), and $r_{34}$ constitute a symmetry group of the Garnier system in two variables. 
This group includes the affine Weyl group of type $B_5^{(1)}$ \cite{SuzukiT2005:MR2177118}. 
We note that the full symmetry group of the Garnier system has not yet been completely identified. 
\end{remark}

Based on $r_{34}$, for later convenience, we introduce the  transformations $r_{ab}=r_{ba}$ for other pairs of distinct subscripts  by 
\begin{equation}
r_{\sigma(a)\sigma(b)}=\sigma r_{ab}\sigma^{-1}\quad(\sigma\in\mathfrak{S}_5),
\end{equation}
where $a,b\in\{1,\ldots,5\}$, $a\not=b$. 
It is easy to see that we have  
\begin{equation}\label{fundamental_rel_r}
{r_{ab}}^2=1,\qquad r_ir_{ab}=r_{ab}r_i\quad(i\ne a,b). 
\end{equation}
The action of $r_{ab}$ on the parameters and the dependent variables is given by 
\begin{equation}
\begin{array}{lll}
r_{ab}&:&\theta_a\mapsto-\theta_a+1,\quad\theta_b\mapsto-\theta_b+1,
\quad\theta_c\mapsto-\theta_c\,(c\ne a,b),\\[1mm]
&&q_i\mapsto Q_i^{(ab)},\quad p_i\mapsto P_i^{(ab)},
\end{array}
\end{equation}
where $(\theta_3,\theta_4,\theta_5)=(\ka_0,\ka_1,\ka_{\infty})$, and
$Q_i^{(ab)},P_i^{(ab)}\,(i=1,2)$ are certain rational functions in the canonical variables $q_i,p_i\,(i=1,2)$. 

We also define the {\it Schlesinger transformations}
\begin{equation}
T_{ab}=r_ar_br_cr_dr_er_{ab},\quad{\mathcal T}_{ab}=r_ar_cr_dr_er_{ab}r_b,
\end{equation}
where $a,b,c,d,e\in\{1,\ldots,5\}$ are mutually distinct. These transformations commute with each other, and act on the parameters by
\begin{equation}
\begin{array}{llll}
T_{ab}&:&\theta_a\mapsto\theta_a+1,\qquad\theta_b\mapsto\theta_b+1,
&\quad\theta_c\mapsto\theta_c\quad(c\ne a,b),\\
{\mathcal T}_{ab}&:&\theta_a\mapsto\theta_a+1,\qquad\theta_b\mapsto\theta_b-1,
&\quad\theta_c\mapsto\theta_c\quad(c\ne a,b).\\
\end{array}
\end{equation}

\vskip2mm

\subsection{Transformations of the Hamiltonians}
One can introduce the $\tau$-variables through the Hamiltonians \cite{IKSY} in a similar manner to the Painlev\'e equations (see, for example, \cite{Okamoto:PVI, Noumi:book}). 
For this purpose, we here describe the action of $r_i$ ($i=1,\ldots,5$) and $r_{34}$ on the Hamiltonians. 

Following \cite{SuzukiT2005:MR2177118}, we first introduce the modified Hamiltonians $h_i^{(3)}$ ($i=1,2$) by adding correction terms, which do not depend on $p_i$, $q_i$, to the Hamiltonian $H_1$ \eqref{Garnier_hamiltonian} as
\begin{equation}
\begin{array}{ll}
h_1^{(3)}\!\!\!&
=H_1-\dfrac{\theta_1\theta_2}{2}\left(\dfrac{1}{t_1-1}-\dfrac{1}{t_1-t_2}\right)
+\dfrac{\theta_1(\ka_0-1)}{2}\left(\dfrac{1}{t_1}-\dfrac{1}{t_1-1}\right)
-\dfrac{\theta_1\ka_1}{2}\dfrac{1}{t_1-1}\\[4mm]
&+\dfrac{{\theta_1}^2}{8}
\left(\dfrac{1}{t_1}-\dfrac{3}{t_1-1}+\dfrac{1}{t_1-t_2}\right)
-\dfrac{{\theta_2}^2}{24}
\left(\dfrac{1}{t_1}+\dfrac{1}{t_1-1}-\dfrac{3}{t_1-t_2}\right)\\[4mm]
&+\dfrac{(\ka_0-1)^2}{24}
\left(\dfrac{3}{t_1}-\dfrac{1}{t_1-1}-\dfrac{1}{t_1-t_2}\right)
-\dfrac{{\ka_1}^2}{24}
\left(\dfrac{1}{t_1}+\dfrac{1}{t_1-1}+\dfrac{1}{t_1-t_2}\right)\\[4mm]
&-\dfrac{{\ka_{\infty}}^2}{24}
\left(\dfrac{1}{t_1}-\dfrac{3}{t_1-1}+\dfrac{1}{t_1-t_2}\right),
\end{array}
\end{equation}
and $h_2^{(3)}=\sigma_{12}(h_1^{(3)})$. Then we have
\begin{equation}
r_i(h_j^{(3)})=h_j^{(3)}\quad(i\ne3),\qquad
r_3(h_j^{(3)})=h_j^{(3)}-\dfrac{\ka_0q_j}{t_j^2(g_t-1)}
+\dfrac{\ka_0}{6}
\left(\dfrac{3}{t_1}-\dfrac{1}{t_1-1}-\dfrac{1}{t_1-t_2}\right).
\end{equation}
We see that the $1$-form
\begin{equation}
\omega_3=\sum_{j=1}^2h_j^{(3)}dt_j ,
\end{equation}
is invariant with respect to the action of $\sigma_{ab}\,(a,b\ne3)$, namely
\begin{equation}
\sigma_{ab}(\omega_3)=\omega_3\quad(a,b\ne3).
\end{equation}

Let us introduce the $1$-form $\omega_k\,(k=1,\ldots,5)$ and the Hamiltonians $h_j^{(k)}\,(j=1,2;\,k=1,\ldots,5)$ by
\begin{equation}
\sigma(\omega_k)=\omega_{\sigma(k)}\quad(\sigma\in\mathfrak{S}_5),
\end{equation}
and  
\begin{equation}\label{omega_k}
\omega_k=\sum_{j=1}^2h_j^{(k)}dt_j,
\end{equation}
respectively. 
We also define an auxiliary $1$-form $\omega$ by
\begin{equation}\label{omega_aux}
 \omega = \omega_5 + r_5(\omega_5).
\end{equation}
Then it follows that
\begin{equation}\label{r_i_on_omega}
\begin{array}{l}
r_i(\omega_k)=\omega_k, \quad(i\ne k),\\
r_{ab}(\omega_a)=\omega_b,\quad r_{ab}(\omega_b)=\omega_a.
\end{array}
\end{equation}

It is possible to verify by direct computation the following formulas among the above Hamiltonians
\begin{equation}\label{formula_1}
\begin{array}{l}
r_1(h_1^{(1)})+h_1^{(1)}-r_5(h_1^{(5)})-h_1^{(5)}
-\dfrac{\partial}{\partial t_1}\log q_1=
\dfrac{1}{3}\left(\dfrac{1}{t_1-t_2}-\dfrac{2}{t_1}\right),\\[4mm]
r_2(h_2^{(1)})+h_2^{(1)}-r_5(h_2^{(5)})-h_2^{(5)}
-\dfrac{\partial}{\partial t_2}\log q_1=
-\dfrac{1}{3}\left(\dfrac{1}{t_1-t_2}+\dfrac{1}{t_2-1}\right),\\[4mm]
r_3(h_j^{(3)})+h_j^{(3)}-r_5(h_j^{(5)})-h_j^{(5)}
-\dfrac{\partial}{\partial t_j}
\log\left(\dfrac{q_1}{t_1}+\dfrac{q_2}{t_2}-1\right)
=-\dfrac{1}{3}\left(\dfrac{1}{t_j-1}-\dfrac{1}{t_j}\right),\\[4mm]
r_4(h_j^{(4)})+h_j^{(4)}-r_5(h_j^{(5)})-h_j^{(5)}
-\dfrac{\partial}{\partial t_j}\log(q_1+q_2-1)
=-\dfrac{1}{3}\dfrac{1}{t_j-1},
\end{array}
\end{equation}
\begin{equation}\label{formula_2}
\begin{array}{l}
h_1^{(1)}+r_{34}(h_1^{(1)})-h_1^{(4)}-h_1^{(3)}
-\dfrac{\partial}{\partial t_1}\log(q_1p_1-\theta_1)\\[3mm]
\hskip20mm=-\dfrac{1}{4}
\left(\dfrac{1}{t_1}-\dfrac{1}{t_1-1}-\dfrac{1}{t_1-t_2}\right),\\[3mm]
h_2^{(1)}+r_{34}(h_2^{(1)})-h_2^{(4)}-h_2^{(3)}
-\dfrac{\partial}{\partial t_2}\log(q_1p_1-\theta_1)\\[3mm]
\hskip20mm=-\dfrac{1}{12}
\left(\dfrac{1}{t_2}+\dfrac{1}{t_2-1}+\dfrac{3}{t_1-t_2}\right),
\end{array}
\end{equation}
and 
\begin{equation}\label{formula_3}
\begin{array}{l}
r_5(h_j^{(5)})+r_{34}r_5(h_j^{(5)})-h_j^{(4)}-h_j^{(3)}
-\dfrac{\partial}{\partial t_j}\log(q_1p_1+q_2p_2+\al)\\[3mm]
\hskip20mm
=-\dfrac{1}{12}
\left(\dfrac{1}{t_j}-\dfrac{3}{t_j-1}+\dfrac{1}{t_j-t_k}\right),\\[4mm]
h_j^{(5)}+r_{34}(h_j^{(5)})-h_j^{(4)}-h_j^{(3)}
-\dfrac{\partial}{\partial t_j}\log(q_1p_1+q_2p_2+\al+\ka_{\infty})\\[3mm]
\hskip20mm
=-\dfrac{1}{12}
\left(\dfrac{1}{t_j}-\dfrac{3}{t_j-1}+\dfrac{1}{t_j-t_k}\right).
\end{array}
\end{equation}

From Equations \eqref{formula_1}, \eqref{formula_2} and \eqref{formula_3}, we can deduce the following relations among the $1$-forms $\omega_k$ $(k=1,\ldots, 5)$
\begin{equation}\label{formula'_1}
\begin{split}
&d\log q_1
=r_1(\omega_1)+\omega_1-r_5(\omega_5)-\omega_5
+\dfrac{2}{3}\dfrac{dt_1}{t_1}
+\dfrac{1}{3}\dfrac{dt_2}{t_2-1}
-\dfrac{1}{3}\dfrac{dt_1-dt_2}{t_1-t_2},\\[2mm]
&d\log q_2
=r_2(\omega_2)+\omega_2-r_5(\omega_5)-\omega_5
+\dfrac{1}{3}\dfrac{dt_1}{t_1-1}
+\dfrac{2}{3}\dfrac{dt_2}{t_2}
-\dfrac{1}{3}\dfrac{dt_1-dt_2}{t_1-t_2},\\[2mm]
&d\log\left(\dfrac{q_1}{t_1}+\dfrac{q_2}{t_2}-1\right)
=r_3(\omega_1)+\omega_3-r_5(\omega_5)-\omega_5
+\displaystyle\sum_{j=1}^2
\dfrac{1}{3}\left(\dfrac{1}{t_j-1}-\dfrac{1}{t_j}\right)dt_j,\\[2mm]
&d\log(q_1+q_2-1)
=r_4(\omega_1)+\omega_4-r_5(\omega_5)-\omega_5
+\displaystyle\sum_{j=1}^2\dfrac{1}{3}\dfrac{dt_j}{t_j-1},
\end{split}
\end{equation}
\begin{equation}\label{formula'_2}
\begin{split}
d\log(q_1p_1-\theta_1)
=&\omega_1+r_{34}(\omega_1)-\omega_4-\omega_3\\[1mm]
&+\dfrac{1}{4}\left(\dfrac{1}{t_1}-\dfrac{1}{t_1-1}\right)dt_1
+\dfrac{1}{12}\left(\dfrac{1}{t_2}+\dfrac{1}{t_2-1}\right)dt_2
-\dfrac{1}{4}\dfrac{dt_1-dt_2}{t_1-t_2},\\[3mm]
d\log(q_2p_2-\theta_2)
=&\omega_2+r_{34}(\omega_2)-\omega_4-\omega_3\\[1mm]
&+\dfrac{1}{12}\left(\dfrac{1}{t_1}+\dfrac{1}{t_1-1}\right)dt_1
+\dfrac{1}{4}\left(\dfrac{1}{t_2}-\dfrac{1}{t_2-1}\right)dt_2
-\dfrac{1}{4}\dfrac{dt_1-dt_2}{t_1-t_2}
\end{split}
\end{equation}
and 
\begin{equation}\label{formula'_3}
\begin{split}
&d\log(q_1p_1+q_2p_2+\al)
=\omega_5+r_{34}r_5(\omega_5)-\omega_4-\omega_3\\[1mm]
&\phantom{d\log(q_1p_1+q_2p_2+\al)}
+\displaystyle\sum_{j=1}^2\dfrac{1}{12}\left(\dfrac{1}{t_j}-\dfrac{3}{t_j-1}\right)
+\dfrac{1}{12}\dfrac{dt_1-dt_2}{t_1-t_2},\\[4mm]
&d\log(q_1p_1+q_2p_2+\al+\ka_{\infty})
=\omega_5+r_{34}(\omega_5)-\omega_4-\omega_3\\[1mm]
&\phantom{d\log(q_1p_1+q_2p_2+\al+\ka_{\infty})}
+\displaystyle\sum_{j=1}^2\dfrac{1}{12}\left(\dfrac{1}{t_j}-\dfrac{3}{t_j-1}\right)
+\dfrac{1}{12}\dfrac{dt_1-dt_2}{t_1-t_2},
\end{split}
\end{equation}
respectively.

\subsection{$\tau$-variables and bilinear relations}
Here we introduce the $\tau$-variables for the Garnier system in two variables and formulate the action of the following subgroup 
\begin{equation}\label{subgroup_G}
G=\br{r_i,\ r_{ab}},\qquad i=1,\ldots,5,\quad a,b\in\{1,\ldots,5\},
\end{equation}
on the $\tau$-variables; see also \cite{SuzukiT2005:MR2177118}. Note that the independent variables $t_i\,(i=1,2)$ are invariant under the action of this subgroup.

It is known that the $1$-forms $\omega_k$ ($k=1,\ldots,5$) given in Equations \eqref{omega_k} and
\eqref{omega_aux} are closed \cite{IKSY}, so that one can define the $\tau$-variables $\tau_k$
($k=1,\ldots,5$) and $\tau$ by
\begin{equation}\label{def_tau}
 \omega_k=d\log\tau_k \quad (k=1,\ldots,5) ,\quad \omega = d\log \tau,
\end{equation}
up to constant multiples. 

From Equations \eqref{r_i_on_omega}, \eqref{formula'_1} and \eqref{def_tau}, one can define the action of $r_i$ ($i=1,\ldots,5$) on the variables $\tau_k$ and the auxiliary variable $\tau$ by
\begin{equation}\label{r_on_tau}
\begin{array}{l}
r_i(\tau_k)=\tau_k\ (i\ne k),\quad r_i(\tau)=\tau
\  (i=1,\ldots,5),\\[2mm]
r_1(\tau_1)={t_1}^{-2/3}(t_2-1)^{-1/3}(t_1-t_2)^{1/3}
q_1\ \dfrac{\tau}{\tau_1},\\[4mm]
r_2(\tau_2)={t_2}^{-2/3}(t_1-1)^{-1/3}(t_1-t_2)^{1/3}
q_2\ \dfrac{\tau}{\tau_2},\\[3mm]
\displaystyle
r_3(\tau_3)=\prod_{j=1}^2{t_j}^{1/3}(t_j-1)^{-1/3}\,
\left(\dfrac{q_1}{t_1}+\dfrac{q_2}{t_2}-1\right)\dfrac{\tau}{\tau_3},\\[3mm]
\displaystyle
r_4(\tau_4)=\prod_{j=1}^2(t_j-1)^{-1/3}\,\big(q_1+q_2-1\big)
\dfrac{\tau}{\tau_4},\quad
r_5(\tau_5)=\dfrac{\tau}{\tau_5}.
\end{array}   
\end{equation}

The actions of $r_{ab}\,(a,b\in\{1,\ldots,5\})$ on the $\tau$-variables can be determined from Equations \eqref{formula'_2} and \eqref{formula'_3}. To this end, we define some preliminary notations. For mutually distinct indices $k,a,b\in\{1,\ldots,5\}$, we define $\varphi_{ab,k}=\varphi_{ba,k}$ by 
\begin{equation}\label{def_1:varphi}
\varphi_{34,k}=q_kp_k-\theta_k\quad(k=1,2),\qquad 
\varphi_{34,5}=-(q_1p_1+q_2p_2+\al+\ka_{\infty}),
\end{equation}
and 
\begin{equation}\label{def_2:varphi}
\sigma(\varphi_{ab,k})=\varphi_{\sigma(a)\sigma(b),\sigma(k)}
\quad(\sigma\in\mathfrak{S}_5).
\end{equation}
Further we introduce $\varphi_{a5}\,(a=1,2,3,4)$ by 
\begin{equation}
\begin{array}{ll}
\varphi_{15}=
\dfrac{\sigma_{13}\sigma_{45}r_{34}(g_1-1)}{g_1-1}
\varphi_{15,4}r_4(\varphi_{15,4}),&
\varphi_{25}=
\dfrac{\sigma_{23}\sigma_{45}r_{34}(g_1-1)}{g_1-1}
\varphi_{25,4}r_4(\varphi_{25,4}),\\[4mm]
\varphi_{35}=\dfrac{\sigma_{45}r_{34}(g_1-1)}{g_1-1}\ 
\varphi_{35,4}r_4(\varphi_{35,4}),&
\varphi_{45}=\dfrac{\sigma_{35}r_{34}(g_t-1)}{g_t-1}\ 
\varphi_{45,3}r_3(\varphi_{45,3}),
\end{array}
\end{equation}
where $g_1$ and $g_t$ are given in Equation \eqref{def_g}. Note that $\varphi_{ab,k}$ and $\varphi_{a5}$ are indeed polynomials of the canonical variables $q_i,p_i\,(i=1,2)$. Then the action of $r_{ab}$ on the $\tau$-variables can be expressed as follows 
\begin{equation}
\begin{array}{lll}
r_{ab}&:&\tau_a\mapsto\tau_b,\qquad\tau_b\mapsto\tau_a,\qquad
\tau_k\mapsto t_{ab,k}\varphi_{ab,k}\,
\dfrac{\tau_a\tau_b}{\tau_k}\quad(k\ne a,b),\\[2mm]
&&\tau\mapsto {t_{ab,5}}^2\,
\varphi_{ab,5}\,r_5(\varphi_{ab,5})
\dfrac{{\tau_a}^2{\tau_b}^2}{\tau}\quad(a,b\ne5),\\[2mm]
&&\tau\mapsto t_{a5}\,\varphi_{a5}\,\dfrac{{\tau_a}^2{\tau_5}^2}{\tau}, 
\end{array}
\end{equation}
where $t_{ab,k}=t_{ba,k}$ and $t_{ab}=t_{ba}$ are certain power products of $t_1,t_2,t_1-1,t_2-1$ and $t_1-t_2$. For readers' convenience we give the explicit form of them. Let introduce the notation 
\begin{equation}\label{chi}
 \chi(e_1,e_2,e_3,e_4,e_5)
 ={t_1}^{e_1}{t_2}^{e_2}(t_1-1)^{e_3}(t_2-1)^{e_4}(t_1-t_2)^{e_5},
\end{equation}
then $t_{ab,k}$ and $t_{ab}$ are given by 
\begin{equation}
\begin{array}{ll}
t_{12,3}=\chi(-\frac{1}{4},-\frac{1}{4},-\frac{1}{12},-\frac{1}{12}
,\frac{5}{12}),&
t_{12,4}=\chi(-\frac{1}{12},-\frac{1}{12},-\frac{1}{12},-\frac{1}{12}
,\frac{5}{12}),\\[1mm]
t_{12,5}=\chi(-\frac{1}{12},-\frac{1}{12},-\frac{1}{4},-\frac{1}{4}
,\frac{5}{12}),
\end{array}
\end{equation}
\begin{equation}
\begin{array}{ll}
t_{13,2}=\chi(\frac{5}{12},-\frac{1}{4},-\frac{1}{12},\frac{1}{4}
,-\frac{1}{4}),&
t_{13,4}=\sqrt{-1}\,\chi(\frac{5}{12},-\frac{1}{12},-\frac{1}{12},-\frac{1}{12}
,-\frac{1}{12}),\\[1mm]
t_{13,5}=\chi(\frac{5}{12},-\frac{1}{12},-\frac{1}{4},\frac{1}{4}
,-\frac{1}{12}),
\end{array}
\end{equation}
\begin{equation}
\begin{array}{ll}
t_{14,2}=-\chi(-\frac{1}{12},\frac{1}{4},-\frac{1}{12},\frac{1}{4}
,\frac{1}{4}),&
t_{14,3}=-\chi(-\frac{1}{4},\frac{1}{4},-\frac{1}{12},-\frac{1}{12}
,-\frac{1}{12}),\\[1mm]
t_{14,5}=-\chi(-\frac{1}{12},-\frac{1}{12},-\frac{1}{4},\frac{1}{4}
,-\frac{1}{12}),
\end{array}
\end{equation}
\begin{equation}
\begin{array}{ll}
t_{23,1}=-\chi(-\frac{1}{4},\frac{5}{12},\frac{1}{4},-\frac{1}{12}
,-\frac{1}{4}),&
t_{23,4}=-\chi(-\frac{1}{12},\frac{5}{12},-\frac{1}{12},-\frac{1}{12}
,-\frac{1}{12}),\\[1mm]
t_{23,5}=\sqrt{-1}\,\chi(-\frac{1}{12},\frac{5}{12},\frac{1}{4},-\frac{1}{4}
,-\frac{1}{12}),
\end{array}
\end{equation}
\begin{equation}
\begin{array}{ll}
t_{24,1}=-\sqrt{-1}\,\chi(\frac{1}{4},-\frac{1}{12},\frac{1}{4},-\frac{1}{12}
,-\frac{1}{4}),&
t_{24,3}=-\chi(\frac{1}{4},-\frac{1}{4},-\frac{1}{12},-\frac{1}{12}
,-\frac{1}{12}),\\[1mm]
t_{24,5}=-\chi(-\frac{1}{12},-\frac{1}{12},\frac{1}{4},-\frac{1}{4}
,-\frac{1}{12}),
\end{array}
\end{equation}
\begin{equation}
\begin{array}{ll}
t_{34,1}=\sqrt{-1}\,\chi(-\frac{1}{4},-\frac{1}{12},\frac{1}{4},-\frac{1}{12}
,\frac{1}{4}),&
t_{34,2}=\sqrt{-1}\,\chi(-\frac{1}{12},-\frac{1}{4},-\frac{1}{12},\frac{1}{4}
,\frac{1}{4}),\\[1mm]
t_{34,5}=\sqrt{-1}\,\chi(-\frac{1}{12},-\frac{1}{12},\frac{1}{4},\frac{1}{4}
,-\frac{1}{12}),
\end{array}
\end{equation}
\begin{equation}
\begin{array}{ll}
t_{15,2}=\chi(-\frac{1}{12},\frac{1}{4},\frac{5}{12},-\frac{1}{4}
,-\frac{1}{4}),&
t_{15,3}=\sqrt{-1}\,\chi(-\frac{1}{4},\frac{1}{4},\frac{5}{12},-\frac{1}{12}
,-\frac{1}{12}),\\[1mm]
t_{15,4}=\sqrt{-1}\,\chi(-\frac{1}{12},-\frac{1}{12},\frac{5}{12},-\frac{1}{12}
,-\frac{1}{12}),&
t_{15}=\chi(-\frac{1}{6},-\frac{1}{6},\frac{1}{2},\frac{1}{2}
,-\frac{1}{6}),
\end{array}
\end{equation}
\begin{equation}
\begin{array}{ll}
t_{25,1}=-\chi(\frac{1}{4},-\frac{1}{12},-\frac{1}{4},\frac{5}{12}
,-\frac{1}{4}),&
t_{25,3}=-\chi(\frac{1}{4},-\frac{1}{4},-\frac{1}{12},\frac{5}{12}
,-\frac{1}{12}),\\[1mm]
t_{25,4}=-\chi(-\frac{1}{12},-\frac{1}{12},-\frac{1}{12},\frac{5}{12}
,-\frac{1}{12}),&
t_{25}=-\chi(-\frac{1}{6},-\frac{1}{6},\frac{1}{2},\frac{1}{2}
,-\frac{1}{6}),
\end{array}
\end{equation}
\begin{equation}
\begin{array}{ll}
t_{35,1}=\chi(-\frac{1}{4},-\frac{1}{12},-\frac{1}{4},-\frac{1}{12}
,\frac{1}{4}),&
t_{35,2}=\chi(-\frac{1}{12},-\frac{1}{4},-\frac{1}{12},-\frac{1}{4}
,\frac{1}{4}),\\[1mm]
t_{35,4}=\chi(-\frac{1}{12},-\frac{1}{12},-\frac{1}{12},-\frac{1}{12}
,-\frac{1}{12}),&
t_{35}=\chi(-\frac{1}{6},-\frac{1}{6},\frac{1}{2},\frac{1}{2}
,-\frac{1}{6}),
\end{array}
\end{equation}
\begin{equation}
\begin{array}{ll}
t_{45,1}=-\chi(\frac{1}{4},-\frac{1}{12},-\frac{1}{4},-\frac{1}{12}
,\frac{1}{4}),&
t_{45,2}=-\chi(-\frac{1}{12},\frac{1}{4},-\frac{1}{12},-\frac{1}{4}
,\frac{1}{4}),\\[1mm]
t_{45,3}=-\sqrt{-1}\chi(\frac{1}{4},\frac{1}{4},-\frac{1}{12},-\frac{1}{12}
,-\frac{1}{12}),&
t_{45}=-\chi(-\frac{1}{6},-\frac{1}{6},\frac{1}{2},\frac{1}{2}
,-\frac{1}{6}).
\end{array}
\end{equation}

\medskip

On the level of the $\tau$-variables, the fundamental relations \eqref{fundamental_rel_r} are preserved except for some modifications. 
First we have
\begin{equation}
{r_{ab}}^2(\tau_k)=-\tau_k\quad(k\ne a,b).
\end{equation}
The commutativity of the Schlesinger transformations $T_{\mu},T_{\mu'}\in\big\{T_{ab},{\mathcal T}_{ab}\,\big|\,a,b\in\{1,\ldots,5\},\,a\ne b\big\}$ are modified by 
\begin{equation}
T_{\mu}T_{\mu'}(\tau_k)
=c_{\mu,\mu'}T_{\mu'}T_{\mu}(\tau_k)\quad(k=1,\ldots,5),\qquad
T_{\mu}T_{\mu'}(\tau)
={c_{\mu,\mu'}}^2T_{\mu'}T_{\mu}(\tau),
\end{equation}
where $c_{\mu,\mu'}$ are certain constants depending on the pair $(\mu, \mu')$.
\medskip

One can derive bilinear relations for the $\tau$-variables from the above formulation. 
It is easy to verify that we have
\begin{equation}
r_k(\varphi_{ab,k})-\varphi_{ab,k}=\theta_k.
\end{equation}
Substituting $t_{ab,k}\varphi_{ab,k}=\dfrac{\tau_kr_{ab}(\tau_k)}{\tau_a\tau_b}$ into the above formula, we get the bilinear relations 
\begin{equation}\label{bi:abk}
\tau_{k,k}r_{ab}(\tau_{k,k})-\tau_kr_{ab}(\tau_k)
=\theta_k\,t_{ab,k}\tau_a\tau_b ,
\end{equation}
for mutually distinct indices $a,b,k\in\{1,\ldots,5\}$, where $\tau_{k,k}=r_k(\tau_k)$. 
From the formulas \eqref{r_on_tau}, we get the bilinear relation 
\begin{align}\label{bi:constraint}
\displaystyle\prod_{j=1}^2{t_j}^{-1/3}(t_j-1)^{1/3}\,\tau_3\tau_{3,3}
=&{t_1}^{-1/3}(t_2-1)^{1/3}(t_1-t_2)^{-1/3}\,\tau_1\tau_{1,1}\notag\\
&+{t_2}^{-1/3}(t_1-1)^{1/3}(t_1-t_2)^{-1/3}\,\tau_2\tau_{2,2}-\tau_5\tau_{5,5}.
\end{align}

\section{Derivation of difference equations and constraints \label{Section:diffence equations}}
In this section, we derive a system of partial difference equations with some constraints from the bilinear relations for the $\tau$-variables of the Garnier system in two variables. 
The dependent variable of the system we derive can be regarded as defined on $\bbZ^4$. 
We obtain the difference equations \eqref{cr:3d} and \eqref{eqn_of_power_fnct} by restricting the domain of the system to $\bbZ^3$.

\subsection{$f$-variable and nonlinear relations}
Suggested by our preceding works  \cite{HKM:dSchwarzianKdV,AHKM:circle_patterns}, we introduce the $f$-variable in terms of the $\tau$-variables by 
\begin{equation}
f=\dfrac{\tau_{5,5}}{\tau_5}=\dfrac{r_5(\tau_5)}{\tau_5}.
\end{equation}
We then derive nonlinear relations that yield equations \eqref{cr:3d} and \eqref{eqn_of_power_fnct} in terms of the $f$-variable. 
Note that the action of the Schlesinger transformations $T_{ab},{\mathcal
T}_{ab}\,(a,b\in\{1,\ldots,5\})$ on $f$ commutes with each other, since the $f$-variable is defined by a ratio of $\tau$-variables. 
\begin{proposition}\label{prop:cr}
The $f$-variable satisfies the following relations 
\begin{equation}\label{cr_local}
\begin{array}{l}
\dfrac{f-{T_{14}}^{-1}(f)}{{T_{14}}^{-1}(f)-{T_{12}}^{-1}(f)}\,
\dfrac{{T_{12}}^{-1}(f)-{T_{24}}^{-1}(f)}{{T_{24}}^{-1}(f)-f}
=\dfrac{1-t_2}{1-t_1},\\[4mm]
\dfrac{f-{T_{24}}^{-1}(f)}{{T_{24}}^{-1}(f)-{T_{23}}^{-1}(f)}\,
\dfrac{{T_{23}}^{-1}(f)-{T_{34}}^{-1}(f)}{{T_{34}}^{-1}(f)-f}
=\dfrac{1}{1-t_2},\\[4mm]
\dfrac{f-{T_{34}}^{-1}(f)}{{T_{34}}^{-1}(f)-{T_{13}}^{-1}(f)}\,
\dfrac{{T_{13}}^{-1}(f)-{T_{14}}^{-1}(f)}{{T_{14}}^{-1}(f)-f}
=1-t_1.
\end{array}
\end{equation}
\end{proposition}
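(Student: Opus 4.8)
The plan is to convert the three cross-ratio equations into elementary identities among the polynomials $\varphi_{ab,5}$ of \eqref{def_1:varphi}--\eqref{def_2:varphi}, and then to verify those from the explicit $\mathfrak{S}_5$-action on the Garnier canonical variables. To this end I would first establish a closed formula for $T_{ab}^{-1}(f)$ when $5\notin\{a,b\}$. Writing $T_{ab}=r_ar_br_cr_dr_er_{ab}$ with $\{a,b,c,d,e\}=\{1,\dots,5\}$ and using $f=r_5(\tau_5)/\tau_5$ together with \eqref{r_on_tau} (each $r_i$ with $i\ne5$ fixes $\tau_5$ and $\tau$, hence fixes $f$, while $r_5(f)=1/f$), the factor $r_ar_br_cr_dr_e$ sends $f$ to $1/f$, so $T_{ab}^{-1}(f)=1/r_{ab}(f)$. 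Since $r_{ab}(\tau_5)=t_{ab,5}\varphi_{ab,5}\tau_a\tau_b/\tau_5$, $r_5r_{ab}=r_{ab}r_5$, and $r_5(\varphi_{ab,5})=\varphi_{ab,5}+\ka_\infty$ (the $k=5$ case of $r_k(\varphi_{ab,k})-\varphi_{ab,k}=\theta_k$), one gets $r_{ab}\big(r_5(\tau_5)\big)=t_{ab,5}(\varphi_{ab,5}+\ka_\infty)\tau_a\tau_b/r_5(\tau_5)$, and therefore
\[
T_{ab}^{-1}(f)=\frac{1}{r_{ab}(f)}=\frac{\varphi_{ab,5}}{\varphi_{ab,5}+\ka_\infty}\,f ,\qquad 5\notin\{a,b\} .
\]
(The sign modifications ${r_{ab}}^2(\tau_k)=-\tau_k$ are irrelevant, as $f$ is a ratio of $\tau$'s.)

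From this formula one immediately obtains
\[
f-T_{ab}^{-1}(f)=\frac{\ka_\infty}{\varphi_{ab,5}+\ka_\infty}\,f ,\qquad
T_{ab}^{-1}(f)-T_{cd}^{-1}(f)=\frac{\ka_\infty(\varphi_{ab,5}-\varphi_{cd,5})}{(\varphi_{ab,5}+\ka_\infty)(\varphi_{cd,5}+\ka_\infty)}\,f .
\]
Substituting these into the three cross-ratios of \eqref{cr_local} cancels every factor of $f$, of $\ka_\infty$, and of $\varphi_{ab,5}+\ka_\infty$, and the three equations reduce respectively to
\[
\frac{\varphi_{12,5}-\varphi_{24,5}}{\varphi_{12,5}-\varphi_{14,5}}=\frac{1-t_2}{1-t_1},\qquad
\frac{\varphi_{23,5}-\varphi_{34,5}}{\varphi_{23,5}-\varphi_{24,5}}=\frac{1}{1-t_2},\qquad
\frac{\varphi_{13,5}-\varphi_{14,5}}{\varphi_{13,5}-\varphi_{34,5}}=1-t_1 .
\]

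It then remains to compute the relevant $\varphi_{ab,5}$ from $\varphi_{34,5}=-(q_1p_1+q_2p_2+\al+\ka_\infty)$ by means of the equivariance \eqref{def_2:varphi}. Applying $\sigma_{23},\sigma_{34},\sigma_{12}$ and $\sigma_{13}=\sigma_{12}\sigma_{23}\sigma_{12}$ to the canonical variables, I expect to obtain
\[
\varphi_{14,5}=\varphi_{34,5}+t_1p_1 ,\quad \varphi_{24,5}=\varphi_{34,5}+t_2p_2 ,\quad
\varphi_{13,5}=\varphi_{34,5}+p_1 ,\quad \varphi_{23,5}=\varphi_{34,5}+p_2 ,
\]
\[
\varphi_{12,5}=\varphi_{34,5}+\frac{(t_2-1)t_1p_1-(t_1-1)t_2p_2}{t_2-t_1} ,
\]
so that $\varphi_{12,5}-\varphi_{24,5}=(t_2-1)(t_1p_1-t_2p_2)/(t_2-t_1)$, $\varphi_{12,5}-\varphi_{14,5}=(t_1-1)(t_1p_1-t_2p_2)/(t_2-t_1)$, $\varphi_{23,5}-\varphi_{24,5}=(1-t_2)p_2$, $\varphi_{23,5}-\varphi_{34,5}=p_2$, $\varphi_{13,5}-\varphi_{14,5}=(1-t_1)p_1$, $\varphi_{13,5}-\varphi_{34,5}=p_1$; the three displayed identities follow at once, and with them the proposition.

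The structural part is short and mechanical. The main obstacle is the last step: extracting the explicit $\varphi_{ab,5}$ forces one to compose the rather intricate birational maps $\sigma_{ab}$ on $(q_1,q_2,p_1,p_2)$, a finite but bookkeeping-heavy rational-function computation, and it is the only place where the precise structure of the Garnier Hamiltonians really enters. As an independent check one can instead use the bilinear relation \eqref{bi:abk} with $k=5$ to write $f-T_{ab}^{-1}(f)=\ka_\infty\,t_{ab,5}\,\tau_a\tau_b/\big(\tau_5\,T_{ab}^{-1}(\tau_5)\big)$, form the cross-ratios at the level of the $\tau$-variables, and verify that the surviving ratio of the $t_{ab,5}$ power products equals $1-t_2$, $(1-t_2)^{-1}$ and $1-t_1$ using the explicit $\chi$-expressions listed above.
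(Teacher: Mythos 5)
Your proposal is correct, but it proves the proposition by a genuinely different route from the paper. The paper stays entirely at the level of $\tau$-variables: it uses the bilinear relation \eqref{bi:abk} with $k=5$ to express $f-T_{a4}^{-1}(f)$, generates the diagonal differences such as $T_{24}^{-1}(f)-T_{12}^{-1}(f)$ by applying shifted transformations like $T_{24}^{-1}r_4$, and then cancels all $\tau$-factors in the cross-ratio by means of identities of the type $r_{24}(\tau_{1,1})/r_{24}r_1r_{14}(\tau_5)=r_{14}(\tau_{2,2})/r_{14}r_2r_{24}(\tau_5)$. You instead derive the closed formula $T_{ab}^{-1}(f)=\dfrac{\varphi_{ab,5}}{\varphi_{ab,5}+\ka_{\infty}}\,f$ for $5\notin\{a,b\}$, which collapses each cross-ratio to an identity among the $\varphi_{ab,5}$ alone, and then computes these from the $\mathfrak{S}_5$-action on the canonical variables; your key formula is consistent with the paper's \eqref{eq':1-1}, since $\varphi_{14,5}+\ka_\infty=t_{14,5}^{-1}\tau_{5,5}r_{14}(\tau_{5,5})/\tau_1\tau_4$ and $t_{14,5}=-\chi(-\frac{1}{12},-\frac{1}{12},-\frac{1}{4},\frac{1}{4},-\frac{1}{12})$. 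The step you flag as the main obstacle is in fact short: the formulas you only ``expect'' are correct and need nothing beyond the generators $\sigma_{23},\sigma_{12},\sigma_{34}$. Indeed $\sigma_{23}(q_1p_1+q_2p_2)=q_1p_1+q_2p_2-t_2p_2$ gives $\varphi_{24,5}=\sigma_{23}(\varphi_{34,5})=\varphi_{34,5}+t_2p_2$, and applying $\sigma_{12}$ and $\sigma_{34}$ yields $\varphi_{14,5}=\varphi_{34,5}+t_1p_1$, $\varphi_{23,5}=\varphi_{34,5}+p_2$, $\varphi_{13,5}=\varphi_{34,5}+p_1$, while $\varphi_{12,5}=\sigma_{23}(\varphi_{13,5})$ gives your stated expression; the three reduced identities then follow at once. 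Two small remarks: (i) your evaluation of $r_{ab}(\tau_{5,5})$ is safest justified not through the commutation $r_5r_{ab}=r_{ab}r_5$ (which on $\tau$-variables is only guaranteed up to constant modifications) but directly from \eqref{r_on_tau}, since $\tau_{5,5}=\tau/\tau_5$ and the explicit action of $r_{ab}$ on $\tau$ and $\tau_5$ gives $r_{ab}(\tau_{5,5})=t_{ab,5}\,r_5(\varphi_{ab,5})\,\tau_a\tau_b/\tau_{5,5}$ exactly; (ii) your closing ``independent check'' is not quite as immediate as stated, because the diagonal differences are not directly given by \eqref{bi:abk} with $k=5$ and one is forced back to the shifted relations and $\tau$-identities that constitute the paper's own proof. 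What each approach buys: the paper's argument keeps the whole derivation inside the bilinear/$\tau$ formalism used for the subsequent propositions, whereas yours makes the geometric content transparent (all four shifted values are multiples of $f$ by simple rational factors in $\varphi_{ab,5}$) at the price of invoking the explicit birational $\mathfrak{S}_5$-action on $(q_i,p_i)$, which the paper's proof never needs.
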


We remark that each equation in \eqref{cr_local} corresponds to the cross-ratio equation or discrete Schwarzian KdV equation \cite{NCWQ1984:MR763123,QNCL1984:MR761644} which is also known as
Q1$_{\de=0}$ \cite{ABS2003:MR1962121,ABS2009:MR2503862,BollR2011:MR2846098,BollR2012:MR3010833,BollR:thesis}.

\begin{proof} 
The bilinear relations \eqref{bi:abk} with $(a,b,k)=(1,4,5)$ and $(2,4,5)$ yield
\begin{equation}\label{eq':1-1}
f-{T_{14}}^{-1}(f)=-\ka_{\infty}\,
\textstyle\chi(-\frac{1}{12},-\frac{1}{12},-\frac{1}{4},\frac{1}{4},-\frac{1}{12})
\dfrac{\tau_1\tau_4}{\tau_5r_{14}(\tau_{5,5})},
\end{equation}
\begin{equation}\label{eq':1-2}
f-{T_{24}}^{-1}(f)=
-\ka_{\infty}\,
\textstyle\chi(-\frac{1}{12},-\frac{1}{12},\frac{1}{4},-\frac{1}{4},-\frac{1}{12})
\dfrac{\tau_2\tau_4}{\tau_5r_{24}(\tau_{5,5})},
\end{equation}
from which we have 
\begin{equation}\label{ratio:1-1}
\dfrac{f-{T_{14}}^{-1}(f)}{f-{T_{24}}^{-1}(f)}
=(t_1-1)^{-1/2}(t_2-1)^{1/2}
\dfrac{\tau_1r_{24}(\tau_{5,5})}{\tau_2r_{14}(\tau_{5,5})}.
\end{equation}
Applying the transformation ${T_{24}}^{-1}r_4$ to \eqref{eq':1-1}, we get 
\begin{equation}
{T_{24}}^{-1}(f)-{T_{12}}^{-1}(f)
=-\ka_{\infty}\,
\textstyle\chi(-\frac{1}{12},-\frac{1}{12},-\frac{1}{4},\frac{1}{4},-\frac{1}{12})
\dfrac{r_{24}(\tau_{1,1})\tau_2}{r_{24}(\tau_{5,5})r_{24}r_1r_{14}(\tau_5)}.
\end{equation}
Similarly, we obtain  
\begin{equation}
{T_{14}}^{-1}(f)-{T_{12}}^{-1}(f)
=-\ka_{\infty}\,
\textstyle\chi(-\frac{1}{12},-\frac{1}{12},\frac{1}{4},-\frac{1}{4},-\frac{1}{12})
\dfrac{r_{14}(\tau_{2,2})\tau_1}{r_{14}(\tau_{5,5})r_{14}r_2r_{24}(\tau_5)},
\end{equation}
by applying ${T_{14}}^{-1}r_4$ to \eqref{eq':1-2}. Due to the relation 
\begin{equation}
\dfrac{r_{24}(\tau_{1,1})}{r_{24}r_1r_{14}(\tau_5)}=
\dfrac{r_{14}(\tau_{2,2})}{r_{14}r_2r_{24}(\tau_5)},
\end{equation}
we have 
\begin{equation}\label{ratio:1-2}
\dfrac{{T_{24}}^{-1}(f)-{T_{12}}^{-1}(f)}{{T_{14}}^{-1}(f)-{T_{12}}^{-1}(f)}
=(t_1-1)^{-1/2}(t_2-1)^{1/2}
\dfrac{\tau_2r_{14}(\tau_{5,5})}{\tau_1r_{24}(\tau_{5,5})}.
\end{equation}
The formulas \eqref{ratio:1-1} and \eqref{ratio:1-2} yield the first relation of \eqref{cr_local}.

Similar computations give the other relations of Proposition \ref{prop:cr}. 
Although they are obtained by repetition of the same process, we exhibit the derivations in detail for readers' convenience. 
First, we show the third relation of \eqref{cr_local}. 
The bilinear relation \eqref{bi:abk} with $(a,b,k)=(3,4,5)$ gives
\begin{equation}\label{eq':2-2}
f-{T_{34}}^{-1}(f)=
\sqrt{-1}\,\ka_{\infty}\,
\textstyle\chi(-\frac{1}{12},-\frac{1}{12},\frac{1}{4},\frac{1}{4},-\frac{1}{12})
\dfrac{\tau_3\tau_4}{\tau_5r_{34}(\tau_{5,5})}.
\end{equation}
Then we have
\begin{equation}\label{ratio:2-1}
\dfrac{f-{T_{14}}^{-1}(f)}{f-{T_{34}}^{-1}(f)}
=\sqrt{-1}\,(t_1-1)^{-1/2}
\dfrac{\tau_1r_{34}(\tau_{5,5})}{\tau_3r_{14}(\tau_{5,5})},
\end{equation}
from \eqref{eq':1-1} and \eqref{eq':2-2}. 
Applying the transformation ${T_{34}}^{-1}r_4$ to \eqref{eq':1-1}, we get
\begin{equation}
{T_{34}}^{-1}(f)-{T_{13}}^{-1}(f)
=-\ka_{\infty}\,
\textstyle\chi(-\frac{1}{12},-\frac{1}{12},-\frac{1}{4},\frac{1}{4},-\frac{1}{12})
\dfrac{r_{34}(\tau_{1,1})\tau_3}{r_{34}(\tau_{5,5})r_{34}r_1r_{14}(\tau_5)}.
\end{equation}
Similarly we obtain 
\begin{equation}
{T_{14}}^{-1}(f)-{T_{13}}^{-1}(f)
=\sqrt{-1}\,\ka_{\infty}\,
\textstyle\chi(-\frac{1}{12},-\frac{1}{12},\frac{1}{4},\frac{1}{4},-\frac{1}{12})
\dfrac{r_{14}(\tau_{3,3})\tau_1}{r_{14}(\tau_{5,5})r_{14}r_3r_{34}(\tau_5)},
\end{equation}
by applying ${T_{14}}^{-1}r_4$ to \eqref{eq':2-2}. Noticing the relation 
\begin{equation}
\dfrac{r_{34}(\tau_{1,1})}{r_{34}r_1r_{14}(\tau_5)}=
\dfrac{r_{14}(\tau_{3,3})}{r_{14}r_3r_{34}(\tau_5)},
\end{equation}
we have 
\begin{equation}\label{ratio:2-2}
\dfrac{{T_{34}}^{-1}(f)-{T_{13}}^{-1}(f)}{{T_{14}}^{-1}(f)-{T_{13}}^{-1}(f)}
=\sqrt{-1}\,(t_1-1)^{-1/2}
\dfrac{\tau_3r_{14}(\tau_{5,5})}{\tau_1r_{34}(\tau_{5,5})}. 
\end{equation}
The formulas \eqref{ratio:2-1} and \eqref{ratio:2-2} yield the third relation of \eqref{cr_local}. 

Finally, we deduce the second relation of \eqref{cr_local}. 
We have from \eqref{eq':1-2} and \eqref{eq':2-2}
\begin{equation}\label{ratio:3-1}
\dfrac{f-{T_{24}}^{-1}(f)}{f-{T_{34}}^{-1}(f)}
=\sqrt{-1}\,(t_2-1)^{-1/2}
\dfrac{\tau_2r_{34}(\tau_{5,5})}{\tau_3r_{24}(\tau_{5,5})}.
\end{equation}
Applying the transformation ${T_{34}}^{-1}r_4$ to \eqref{eq':1-2}, we get
\begin{equation}
{T_{34}}^{-1}(f)-{T_{23}}^{-1}(f)
=-\ka_{\infty}\,
\textstyle\chi(-\frac{1}{12},-\frac{1}{12},\frac{1}{4},-\frac{1}{4},-\frac{1}{12})
\dfrac{r_{34}(\tau_{2,2})\tau_3}{r_{34}(\tau_{5,5})r_{34}r_2r_{24}(\tau_5)}. 
\end{equation}
Similarly we obtain 
\begin{equation}
{T_{24}}^{-1}(f)-{T_{23}}^{-1}(f)
=\sqrt{-1}\,\ka_{\infty}\,
\textstyle\chi(-\frac{1}{12},-\frac{1}{12},\frac{1}{4},\frac{1}{4},-\frac{1}{12})
\dfrac{r_{24}(\tau_{3,3})\tau_2}{r_{24}(\tau_{5,5})r_{24}r_3r_{34}(\tau_5)}
\end{equation}
by applying ${T_{24}}^{-1}r_4$ to \eqref{eq':2-2}. Then the relation 
\begin{equation}
\dfrac{r_{34}(\tau_{2,2})}{r_{34}r_2r_{24}(\tau_5)}=
\dfrac{r_{24}(\tau_{3,3})}{r_{24}r_3r_{34}(\tau_5)},
\end{equation}
gives
\begin{equation}\label{ratio:3-2}
\dfrac{{T_{34}}^{-1}(f)-{T_{23}}^{-1}(f)}{{T_{24}}^{-1}(f)-{T_{23}}^{-1}(f)}
=\sqrt{-1}\,(t_2-1)^{-1/2}
\dfrac{\tau_3r_{24}(\tau_{5,5})}{\tau_2r_{34}(\tau_{5,5})}. 
\end{equation}
The formuals \eqref{ratio:3-1} and \eqref{ratio:3-2} yield the second relation of \eqref{cr_local}, which completes the proof of Proposition \ref{prop:cr}.
\end{proof}


\begin{proposition}\label{Prop:H1}
The  $f$-variable satisfies the following relations 
\begin{equation}\label{H1_local}
\begin{array}{l}
\left(\dfrac{1}{f}-\dfrac{1}{{T_{14}}^{-1}(f)}\right)
\left({\mathcal T}_{54}(f)-{\mathcal T}_{51}(f)\right)
=\ka_{\infty}(\ka_{\infty}+1)
\,\chi(-\frac{1}{6},-\frac{1}{6},-\frac{1}{2},\frac{1}{2},-\frac{1}{6}),\\[5mm]
\left(\dfrac{1}{f}-\dfrac{1}{{T_{24}}^{-1}(f)}\right)
\left({\mathcal T}_{54}(f)-{\mathcal T}_{52}(f)\right)
=\ka_{\infty}(\ka_{\infty}+1)
\,\chi(-\frac{1}{6},-\frac{1}{6},\frac{1}{2},-\frac{1}{2},-\frac{1}{6}),\\[5mm]
\left(\dfrac{1}{f}-\dfrac{1}{{T_{34}}^{-1}(f)}\right)
\left({\mathcal T}_{54}(f)-{\mathcal T}_{53}(f)\right)
=-\ka_{\infty}(\ka_{\infty}+1)
\,\chi(-\frac{1}{6},-\frac{1}{6},\frac{1}{2},\frac{1}{2},-\frac{1}{6}), 
\end{array}
\end{equation}
where $\chi$ is given by \eqref{chi}.
\end{proposition}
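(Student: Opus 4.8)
The plan is to mirror the proof of Proposition~\ref{prop:cr}: I would rewrite each of the two factors on the left of \eqref{H1_local} as an explicit power product of $t_1,t_2,t_1-1,t_2-1,t_1-t_2$ and of the parameters, multiplied by a rational function of the $\tau$-variables, so that in the product the $\tau$-dependent parts cancel and only the stated right-hand side survives. For the first factor I would apply $r_5$ to the identities \eqref{eq':1-1}, \eqref{eq':1-2} and \eqref{eq':2-2}. Since $r_5$ fixes $t_1,t_2$ and $\tau_1,\dots,\tau_4$, interchanges $\tau_5\leftrightarrow\tau_{5,5}$, sends $\ka_\infty\mapsto-\ka_\infty$, and commutes with $T_{14},T_{24},T_{34}$ (none of whose indices is $5$) — so that $r_5(f)=1/f$ and $r_5\bigl(T_{a4}^{-1}(f)\bigr)=1/T_{a4}^{-1}(f)$ — equation \eqref{eq':1-1} becomes
\[
\frac1f-\frac1{T_{14}^{-1}(f)}
=\ka_\infty\,\chi\bigl(-\tfrac1{12},-\tfrac1{12},-\tfrac14,\tfrac14,-\tfrac1{12}\bigr)\,
\frac{\tau_1\tau_4}{\tau_{5,5}\,r_{14}(\tau_5)},
\]
together with the corresponding formulas coming from \eqref{eq':1-2} and \eqref{eq':2-2}; inserting the $r_{ab}$-action formula $r_{14}(\tau_5)=t_{14,5}\,\varphi_{14,5}\,\tau_1\tau_4/\tau_5$, this collapses to $-\ka_\infty/\!\bigl(f\,\varphi_{14,5}\bigr)$.

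For the second factor I would compute $\mathcal T_{5a}(f)$ in closed form. Writing $\mathcal T_{5a}=r_5\,(\cdots)\,r_{5a}\,r_a$ as in the definition of $\mathcal T_{ab}$ — where $(\cdots)$ denotes the product of the $r_i$ over the three indices other than $5$ and $a$ — and applying this word to $f=\tau_{5,5}/\tau_5$: the leading $r_5$ sends $f$ to $1/f$, the three middle reflections fix it, and $r_{5a}$ acts through its action on $\tau_5$ and on $\tau_{5,5}$ — the latter read off from the $r_{5a}$-action on the auxiliary variable $\tau$ after identifying $\tau=\tau_5\tau_{5,5}$ (which follows from \eqref{omega_aux} and \eqref{def_tau}) — while the trailing $r_a$ fixes $f$. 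This yields a closed formula of the shape $\mathcal T_{5a}(f)=f/\bigl(t_{a5}\,r_a(\varphi_{a5})\bigr)$, so $\mathcal T_{54}(f)-\mathcal T_{51}(f)$ is $f$ times a difference of reciprocals of the polynomials $t_{a5}r_a(\varphi_{a5})$; multiplying by the first factor, the explicit $f$'s cancel and \eqref{H1_local} reduces to an identity among the $\varphi$'s.

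I expect that last identity to be the main obstacle, and the cleanest way to handle it is probably to replace the bare $\varphi$-computation by a Schlesinger-transformed version of \eqref{eq':1-1}: because on $f$ the Schlesinger transformations commute and realize lattice translations, one has $\mathcal T_{54}=T_{15}T_{14}^{-1}$ and $\mathcal T_{51}=T_{45}T_{14}^{-1}$, hence $\mathcal T_{54}(f)-\mathcal T_{51}(f)=T_{14}^{-1}\!\bigl(T_{15}(f)-T_{45}(f)\bigr)$, and $T_{15}(f)-T_{45}(f)$ is obtained by pushing \eqref{eq':1-1} through a Schlesinger transformation that raises $\ka_\infty$ to $\ka_\infty+1$; the extra factor $\ka_\infty+1$ in \eqref{H1_local} has exactly this origin. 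The $\tau$-variables thus produced should cancel against those of the first factor by the same elementary identities among $\tau$-variables used in the proof of Proposition~\ref{prop:cr} (relations of the type $\tfrac{r_{24}(\tau_{1,1})}{r_{24}r_1r_{14}(\tau_5)}=\tfrac{r_{14}(\tau_{2,2})}{r_{14}r_2r_{24}(\tau_5)}$ and its images under permutations of the indices). The genuine difficulty is thus twofold: identifying the precise Schlesinger transformation supplying the $\ka_\infty+1$, and keeping track of the power-product prefactors $\chi(\cdots)$ and the $\sqrt{-1}$ signs accumulated through the $r_{ab}$- and $T$-actions so that they recombine into exactly $\chi\bigl(-\tfrac16,-\tfrac16,-\tfrac12,\tfrac12,-\tfrac16\bigr)$, with the correct overall sign, in each of the three relations; the cases $(2,4)$ and $(3,4)$ are then entirely analogous.
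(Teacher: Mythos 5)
Your overall strategy is the paper's: write each factor of \eqref{H1_local} as an explicit $\chi$-prefactor times a ratio of $\tau$-variables and let the $\tau$'s cancel in the product, with the factor $\ka_{\infty}+1$ arising from a $\ka_{\infty}$-raising image of \eqref{eq':1-1}. Your treatment of the first factor is correct: applying $r_5$ to \eqref{eq':1-1} does give $\frac{1}{f}-\frac{1}{T_{14}^{-1}(f)}$ in $\tau$-form, and the simplification against $t_{14,5}$ to $-\ka_{\infty}/\bigl(f\,\varphi_{14,5}\bigr)$ checks out. The trouble is that both of your routes to the second factor stop exactly where the content of the proposition lies. The ``cleanest'' route cannot be run as stated: pushing \eqref{eq':1-1} through any Schlesinger translation $A$ produces $A(f)-T_{14}^{-1}A(f)$, a difference of $f$-values along the $T_{14}$-direction, whereas $T_{15}(f)-T_{45}(f)$ is a difference along the ${\mathcal T}_{14}$-direction (since $T_{15}T_{45}^{-1}$ shifts $(\theta_1,\ka_1)$ by $(+1,-1)$); no pure translation converts one into the other, so a reflection must enter. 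The paper's device is to apply $r_4T_{45}$ (not a translation, though it does raise $\ka_{\infty}$ by one) directly to \eqref{eq':1-1}: its left-hand side becomes ${\mathcal T}_{54}(f)-{\mathcal T}_{51}(f)$ at one stroke (for the first term because $r_4(f)=f$ and $r_4T_{45}=r_5r_1r_2r_3r_{45}$), while $-\ka_{\infty}$ on the right becomes $-(\ka_{\infty}+1)$.

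The second and decisive gap is the cancellation itself. It does not follow from ``the same elementary identities'' as in Proposition \ref{prop:cr}: those equate two mixed ratios such as $r_{24}(\tau_{1,1})/\bigl(r_{24}r_1r_{14}(\tau_5)\bigr)=r_{14}(\tau_{2,2})/\bigl(r_{14}r_2r_{24}(\tau_5)\bigr)$, whereas here one needs
\[
\frac{r_5r_{45}(\tau_{1,1})}{r_1r_5r_{45}r_{14}(\tau_{5,5})}=-\frac{r_{14}(\tau_5)}{\tau_1},
\]
together with its analogues with $1$ replaced by $2$ and $3$; the minus sign (traceable to ${r_{ab}}^2(\tau_k)=-\tau_k$) is precisely what makes the product of \eqref{eq':1-1} and its $r_4T_{45}$-image collapse to $-\ka_{\infty}(\ka_{\infty}+1)\,\chi(-\frac{1}{6},-\frac{1}{6},-\frac{1}{2},\frac{1}{2},-\frac{1}{6})\,f\,T_{14}^{-1}(f)$, which upon division by $f\,T_{14}^{-1}(f)$ is the first relation of \eqref{H1_local}. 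Neither these identities nor the residual ``$\varphi$-identity'' of your first route is stated or proved in your plan; moreover the asserted closed form ${\mathcal T}_{5a}(f)=f/\bigl(t_{a5}\,r_a(\varphi_{a5})\bigr)$ does not come out of the word $r_5r_1r_2r_3r_{5a}r_a$ under the composition convention that reproduces $T_{ab}:\theta_a\mapsto\theta_a+1$ (rightmost letter acting first); one then gets $f$ multiplied, not divided, by a transformed $\varphi$-factor. So you have located the difficulties correctly, but the step that actually proves \eqref{H1_local} is missing.
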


We note that each of \eqref{H1_local} corresponds H1$_{\ep=0}$ in \cite{ABS2003:MR1962121,ABS2009:MR2503862,BollR2011:MR2846098,BollR2012:MR3010833,BollR:thesis}.
\begin{proof} 
We start with the relations 
\begin{equation}
\begin{array}{l}
f-{T_{14}}^{-1}(f)=-\ka_{\infty}\,
\textstyle\chi(-\frac{1}{12},-\frac{1}{12},-\frac{1}{4},\frac{1}{4},-\frac{1}{12})
\dfrac{\tau_1\tau_4}{\tau_5r_{14}(\tau_{5,5})},\\[4mm]
{\mathcal T}_{54}(f)-{\mathcal T}_{51}(f)=-(\ka_{\infty}+1)\,
\textstyle\chi(-\frac{1}{12},-\frac{1}{12},-\frac{1}{4},\frac{1}{4},-\frac{1}{12})
\dfrac{r_5r_{45}(\tau_{1,1})\tau_{5,5}}{\tau_4r_1r_5r_{45}r_{14}(\tau_{5,5})},
\end{array}
\end{equation}
the first of which is \eqref{eq':1-1} and the second can be obtained by applying $r_4T_{45}$ to the first. 
Due to the relation 
\begin{equation}
\dfrac{r_5r_{45}(\tau_{1,1})}{r_1r_5r_{45}r_{14}(\tau_{5,5})}
=-\dfrac{r_{14}(\tau_5)}{\tau_1},
\end{equation}
we have 
\begin{equation}
\begin{array}{l}
\left(f-{T_{14}}^{-1}(f)\right)
\left({\mathcal T}_{54}(f)-{\mathcal T}_{51}(f)\right)\\[1mm]
\hskip10mm
=-\ka_{\infty}(\ka_{\infty}+1)\,
\textstyle\chi(-\frac{1}{6},-\frac{1}{6},-\frac{1}{2},\frac{1}{2},-\frac{1}{6})
\,f\,{T_{14}}^{-1}(f),
\end{array}
\end{equation}
which is equivalent to the first relation of \eqref{H1_local}. 

The second and third relations can be also obtained in a similar way. 
Let us start with the relations 
\begin{equation}
\begin{array}{l}
f-{T_{24}}^{-1}(f)=
-\ka_{\infty}\,
\textstyle\chi(-\frac{1}{12},-\frac{1}{12},\frac{1}{4},-\frac{1}{4},-\frac{1}{12})
\dfrac{\tau_2\tau_4}{\tau_5r_{24}(\tau_{5,5})},\\[4mm]
{\mathcal T}_{54}(f)-{\mathcal T}_{52}(f)=
-(\ka_{\infty}+1)\,
\textstyle\chi(-\frac{1}{12},-\frac{1}{12},\frac{1}{4},-\frac{1}{4},-\frac{1}{12})
\dfrac{r_5r_{45}(\tau_{2,2})\tau_{5,5}}{\tau_4r_2r_5r_{45}r_{24}(\tau_{5,5})},
\end{array}
\end{equation}
the first of which is \eqref{eq':1-2} and the second can be obtained by applying $r_4T_{45}$ to the first. 
Due to the relation 
\begin{equation}
\dfrac{r_5r_{45}(\tau_{2,2})}{r_2r_5r_{45}r_{24}(\tau_{5,5})}
=-\dfrac{r_{24}(\tau_5)}{\tau_2},
\end{equation}
we have 
\begin{equation}
\begin{array}{l}
\left(f-{T_{24}}^{-1}(f)\right)
\left({\mathcal T}_{54}(f)-{\mathcal T}_{52}(f)\right)\\[1mm]
\hskip10mm
=-\ka_{\infty}(\ka_{\infty}+1)\,
\textstyle\chi(-\frac{1}{6},-\frac{1}{6},\frac{1}{2},-\frac{1}{2},-\frac{1}{6})
\,f\,{T_{24}}^{-1}(f),
\end{array}
\end{equation}
which is equivalent to the second relation of \eqref{H1_local}. 

Similarly we start with the relations 
\begin{equation}
\begin{array}{l}
f-{T_{34}}^{-1}(f)=
\sqrt{-1}\,\ka_{\infty}\,
\textstyle\chi(-\frac{1}{12},-\frac{1}{12},\frac{1}{4},\frac{1}{4},-\frac{1}{12})
\dfrac{\tau_3\tau_4}{\tau_5r_{34}(\tau_{5,5})},\\[4mm]
{\mathcal T}_{54}(f)-{\mathcal T}_{53}(f)=
\sqrt{-1}\,(\ka_{\infty}+1)\,
\textstyle\chi(-\frac{1}{12},-\frac{1}{12},\frac{1}{4},\frac{1}{4},-\frac{1}{12})
\dfrac{r_5r_{45}(\tau_{3,3})\tau_{5,5}}{\tau_4r_3r_5r_{45}r_{34}(\tau_{5,5})},
\end{array}
\end{equation}
the first of which is \eqref{eq':2-2} and the second can be obtained by applying $r_4T_{45}$ to to the first. 
Due to the relation 
\begin{equation}
\dfrac{r_5r_{45}(\tau_{3,3})}{r_3r_5r_{45}r_{34}(\tau_{5,5})}
=-\dfrac{r_{34}(\tau_5)}{\tau_3},
\end{equation}
we have 
\begin{equation}
\begin{array}{l}
\left(f-{T_{34}}^{-1}(f)\right)
\left({\mathcal T}_{54}(f)-{\mathcal T}_{53}(f)\right)\\[1mm]
\hskip15mm
=\ka_{\infty}(\ka_{\infty}+1)\,
\textstyle\chi(-\frac{1}{6},-\frac{1}{6},\frac{1}{2},\frac{1}{2},-\frac{1}{6})
\,f\,{T_{34}}^{-1}(f),
\end{array}
\end{equation}
which is equivalent to the third relation of \eqref{H1_local}. 
This completes the proof of Proposition \ref{Prop:H1}. 
\end{proof}

We next derive some relations that yield the similarity constraint \eqref{eqn_of_power_fnct}.
\begin{lemma}\label{Lemma:constraint}
The $f$-variable satisfies the following relations 
\begin{equation}\label{eqns_constraints_cocal}
\begin{array}{l}
\theta_1\,\dfrac{f-{\mathcal T}_{14}(f)}{T_{14}^{-1}(f)-{\mathcal T}_{14}(f)}+
\theta_2\,\dfrac{f-{\mathcal T}_{24}(f)}{T_{24}^{-1}(f)-{\mathcal T}_{24}(f)}+
\ka_0\,\dfrac{f-{\mathcal T}_{34}(f)}{T_{34}^{-1}(f)-{\mathcal T}_{34}(f)}\\[4mm]
\hskip60mm
=\dfrac{\theta_1+\theta_2+\ka_0+\ka_1+\ka_{\infty}-1}{2},\\[4mm]
\theta_1\,\dfrac{f-{\mathcal T}_{14}(f)}{T_{14}^{-1}(f)-{\mathcal T}_{14}(f)}\,T_{14}^{-1}(f)
+\theta_2\,\dfrac{f-{\mathcal T}_{24}(f)}{T_{24}^{-1}(f)-{\mathcal T}_{24}(f)}\,T_{24}^{-1}(f)
\\[4mm]
\hskip20mm
+\ka_0\,\dfrac{f-{\mathcal T}_{34}(f)}{T_{34}^{-1}(f)-{\mathcal T}_{34}(f)}\,T_{34}^{-1}(f)
=\dfrac{\theta_1+\theta_2+\ka_0+\ka_1-\ka_{\infty}-1}{2}\,f.
\end{array}
\end{equation}
\end{lemma}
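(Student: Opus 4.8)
The plan is to mimic exactly the derivation strategy used in the proofs of Propositions \ref{prop:cr} and \ref{Prop:H1}: express each building-block difference in \eqref{eqns_constraints_cocal} as a ratio of $\tau$-variables using the bilinear relations \eqref{bi:abk}, and then assemble the three terms on the left-hand side so that the bilinear relation \eqref{bi:constraint} produces the right-hand side. The first observation I would make is that the combination $\dfrac{f-{\mathcal T}_{ab}(f)}{T_{ab}^{-1}(f)-{\mathcal T}_{ab}(f)}$ for $(a,b)=(1,4),(2,4),(3,4)$ should, after clearing the common denominator involving $\tau_5$-type variables, collapse to something proportional to $\tau_a\tau_{a,a}$ (up to an explicit power product in $\chi$ and a factor involving $\theta_a$), precisely because \eqref{bi:abk} with $k\in\{1,2,3\}$, $b=4$, $a=5$ relates $\tau_{k,k}r_{k4}(\tau_{k,k})-\tau_kr_{k4}(\tau_k)$ to $\theta_k t_{k4,5}\tau_k\tau_4$ — wait, more precisely I would use the relations with the roles arranged so that the numerator $f-{\mathcal T}_{a4}(f)$ and the difference $T_{a4}^{-1}(f)-{\mathcal T}_{a4}(f)$ are each single bilinear expressions, as in \eqref{eq':1-1}, \eqref{eq':1-2}, \eqref{eq':2-2} and their images under Schlesinger transformations.

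Concretely, the steps I would carry out are: (i) write down $f-{\mathcal T}_{a4}(f)$ and $T_{a4}^{-1}(f)-{\mathcal T}_{a4}(f)$ for $a=1,2,3$ as bilinear $\tau$-expressions, using \eqref{bi:abk} together with the action \eqref{r_on_tau} and the $\tau$-level action of $r_{ab}$, exactly in the spirit of equations \eqref{eq':1-1}–\eqref{eq':2-2}; (ii) form the ratio, observe the cancellation of the transformed-$\tau$ factors (using identities of the type $\dfrac{r_{a4}(\tau_{5,5})}{\cdots}=\cdots$ that already appear in the previous proofs), and conclude that $\theta_a\dfrac{f-{\mathcal T}_{a4}(f)}{T_{a4}^{-1}(f)-{\mathcal T}_{a4}(f)}$ equals $\tau_a\tau_{a,a}$ times an explicit $\chi$-factor (with the appropriate sign, and with $\ka_0$ playing the role of $\theta_3$); (iii) plug in the explicit power products $t_{a4,k}$ listed in the paper and check that the three $\chi$-factors arising for $a=1,2,3$ match, respectively, the coefficients ${t_1}^{-1/3}(t_2-1)^{1/3}(t_1-t_2)^{-1/3}$, ${t_2}^{-1/3}(t_1-1)^{1/3}(t_1-t_2)^{-1/3}$ and $-\prod_j{t_j}^{-1/3}(t_j-1)^{1/3}$ appearing in \eqref{bi:constraint} (divided by whatever common factor one pulls out); (iv) invoke \eqref{bi:constraint} itself — which reads, up to that common factor, (term$_1$)$+$(term$_2$)$-\tau_5\tau_{5,5}=$(term$_3$) — to rewrite the sum of the three left-hand terms as a multiple of $\tau_5\tau_{5,5}$, hence a multiple of $f$ times $\tau_5^2$, hence (after dividing by the normalizing $\tau$-product) a pure constant; then identify that constant, via $\al=-\frac12(\theta_1+\theta_2+\ka_0+\ka_1+\ka_\infty-1)$, as $\dfrac{\theta_1+\theta_2+\ka_0+\ka_1+\ka_\infty-1}{2}$. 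For the second relation of \eqref{eqns_constraints_cocal}, I would repeat the same assembly but insert the extra factor $T_{a4}^{-1}(f)$, which amounts to multiplying each term by a ratio $\tau_{5,5}r_{a4}(\cdots)/\tau_5(\cdots)$; the bookkeeping then funnels the sum into $\tau_5\tau_{5,5}$ weighted by a shifted parameter, producing $\dfrac{\theta_1+\theta_2+\ka_0+\ka_1-\ka_\infty-1}{2}f$, the sign flip on $\ka_\infty$ coming exactly from the $\al\leftrightarrow\al+\ka_\infty$ distinction between the two factors in the Hamiltonian \eqref{Garnier_hamiltonian} (equivalently, the two definitions $\varphi_{34,5}$ versus its $r_5$-image).

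I expect the main obstacle to be bookkeeping of the $\chi$-power products and signs: one must verify that the three explicitly-tabulated factors $t_{14,5},t_{24,5},t_{34,5}$ (and whatever factors come from the ${\mathcal T}_{54}$ and ${\mathcal T}_{5a}$ images of $\tau_{a,a}$) combine to reproduce precisely the coefficients in \eqref{bi:constraint}, and that the accumulated $\sqrt{-1}$'s from $t_{13,\bullet}$, $t_{34,\bullet}$, etc., cancel to give the real constants on the right. A secondary subtlety is tracking how the Schlesinger transformations ${\mathcal T}_{54},{\mathcal T}_{5a}$ act on the auxiliary $\tau$ and on $\tau_5$ — this requires the commutation data and the modified relations $r_{ab}^2(\tau_k)=-\tau_k$, $T_\mu T_{\mu'}=c_{\mu,\mu'}T_{\mu'}T_\mu$ on $\tau$ — but since $f$ is a ratio of $\tau$'s those constant ambiguities drop out, just as remarked before Proposition \ref{prop:cr}. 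Once the identification of each term with a $\chi$-weighted $\tau_a\tau_{a,a}$ is in place, the conclusion is immediate from \eqref{bi:constraint}; so the essential content of the proof is the correct matching of these power-product coefficients.
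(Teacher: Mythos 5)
Your steps (i)--(ii) do start the way the paper does (use \eqref{eq':1-1}, \eqref{eq':1-2}, \eqref{eq':2-2}, their images under $r_1,r_2,r_3$, and the bilinear relation \eqref{bi:abk} with $(a,b,k)=(4,5,k)$), but the object each ratio reduces to is not what you claim, and this breaks your concluding step. One finds
$\theta_a\,\dfrac{f-{\mathcal T}_{a4}(f)}{T_{a4}^{-1}(f)-{\mathcal T}_{a4}(f)}
\propto \dfrac{\tau_{a,a}\,r_{a4}(\tau_{5,5})}{\tau_4\tau_5}
=\dfrac{\tau_{a,a}\,r_{45}(\tau_{a,a})}{\tau_4\tau_5}$ (by \eqref{formula:rab_on_tau}),
i.e.\ the products that occur are $\tau_{a,a}r_{45}(\tau_{a,a})$ divided by $\tau_4\tau_5$, \emph{not} multiples of $\tau_a\tau_{a,a}$. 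Consequently the bilinear relation \eqref{bi:constraint}, which is a relation among $\tau_1\tau_{1,1},\tau_2\tau_{2,2},\tau_3\tau_{3,3},\tau_5\tau_{5,5}$ and contains no parameters $\theta_i,\ka_0,\ka_1,\ka_\infty$, is the wrong tool here: it can neither absorb the products that actually appear nor produce the parameter-dependent constant $\frac{1}{2}(\theta_1+\theta_2+\ka_0+\ka_1+\ka_{\infty}-1)$. Your dimensional bookkeeping also fails: even if the sum were proportional to $\tau_5\tau_{5,5}$, dividing by the common factor $\tau_4\tau_5$ gives $\tau_{5,5}/\tau_4$, which is not a constant, so the claimed collapse of the first relation to a pure constant does not go through.

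The missing idea is the final identification used in the paper: since $t_{45,k}\varphi_{45,k}=\tau_k r_{45}(\tau_k)/(\tau_4\tau_5)$, applying $r_k$ shows that the left-hand side of the first relation equals $r_1(\varphi_{45,1})+r_2(\varphi_{45,2})+r_3(\varphi_{45,3})$, and this sum is evaluated to $-\al=\frac{1}{2}(\theta_1+\theta_2+\ka_0+\ka_1+\ka_{\infty}-1)$ by a direct computation in the canonical variables from the explicit definitions \eqref{def_1:varphi}, \eqref{def_2:varphi} -- not by any bilinear $\tau$-identity. The relation \eqref{bi:constraint} is indeed the key ingredient, but for the \emph{next} statement, \eqref{constraint:2_local}, where only the reciprocals of $T_{a4}^{-1}(f)-{\mathcal T}_{a4}(f)$ occur and, after an overall $r_{45}$, the products $\tau_a\tau_{a,a}$ genuinely arise; you have transplanted that mechanism to the wrong lemma. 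A further simplification you missed: the second relation of \eqref{eqns_constraints_cocal} need not be re-derived with the extra factors $T_{a4}^{-1}(f)$ -- it follows at once by applying $r_5$ to the first relation.
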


\begin{proof}
The second relation can be obtained by applying the transformation $r_5$ to the first one. Then we demonstrate the proof of the first relation in detail. 
We start with the relations 
\begin{equation}
\begin{array}{l}
f-{T_{14}}^{-1}(f)=-\ka_{\infty}\,
\textstyle\chi(-\frac{1}{12},-\frac{1}{12},-\frac{1}{4},\frac{1}{4},-\frac{1}{12})
\dfrac{\tau_1\tau_4}{\tau_5r_{14}(\tau_{5,5})},\\[4mm]
f-{\mathcal T}_{14}(f)=-\ka_{\infty}\,
\textstyle\chi(-\frac{1}{12},-\frac{1}{12},-\frac{1}{4},\frac{1}{4},-\frac{1}{12})
\dfrac{\tau_{1,1}\tau_4}{\tau_5r_1r_{14}(\tau_{5,5})},
\end{array}
\end{equation}
the first of which is \eqref{eq':1-1} and the second can be obtained by applying $r_1$ to the first. 
Then we get
\begin{equation}\label{l1}
{T_{14}}^{-1}(f)-{\mathcal T}_{14}(f)
=\theta_1\,\ka_{\infty}\,\textstyle
\chi(\frac{1}{6},-\frac{1}{6},-\frac{1}{2},\frac{1}{6},\frac{1}{6})\,
\dfrac{{\tau_4}^2}{r_{14}(\tau_{5,5})r_1r_{14}(\tau_{5,5})},
\end{equation}
by using the bilinear relation \eqref{bi:abk} with $(a,b,k)=(4,5,1)$. 
Then we obtain 
\begin{equation}
\theta_1\,\dfrac{f-{\mathcal T}_{14}(f)}{T_{14}^{-1}(f)-{\mathcal T}_{14}(f)}=
-\textstyle\chi(-\frac{1}{4},\frac{1}{12},\frac{1}{4},\frac{1}{12},-\frac{1}{4})
\dfrac{\tau_{1,1}r_{14}(\tau_{5,5})}{\tau_4\tau_5}. 
\end{equation}

In a similar way one can express the other terms of the left-hand side of the first relation in Lemma \ref{Lemma:constraint} by the $\tau$-variables. 
For the second term we start with the relations
\begin{equation}
\begin{array}{l}
f-{T_{24}}^{-1}(f)=
-\ka_{\infty}
\textstyle\chi(-\frac{1}{12},-\frac{1}{12},\frac{1}{4},-\frac{1}{4},-\frac{1}{12})
\dfrac{\tau_2\tau_4}
      {\tau_5r_{24}(\tau_{5,5})},\\[4mm]
f-{\mathcal T}_{24}(f)=
-\ka_{\infty}
\textstyle\chi(-\frac{1}{12},-\frac{1}{12},\frac{1}{4},-\frac{1}{4},-\frac{1}{12})
\dfrac{\tau_{2,2}\tau_4}
      {\tau_5r_2r_{24}(\tau_{5,5})},
\end{array}
\end{equation}
the first of which is \eqref{eq':1-2} and the second can be obtained by applying $r_2$ to the first. 
We have
\begin{equation}\label{l2}
{T_{24}}^{-1}(f)-{\mathcal T}_{24}(f)
=\theta_2\,\ka_{\infty}\,\textstyle
\chi(-\frac{1}{6},\frac{1}{6},\frac{1}{6},-\frac{1}{2},\frac{1}{6})
\dfrac{{\tau_4}^2}{r_{24}(\tau_{5,5})r_2r_{24}(\tau_{5,5})}
\end{equation}
by using the bilinear relation \eqref{bi:abk} with $(a,b,k)=(4,5,2)$. 
Then we obtain 
\begin{equation}
\theta_2\,\dfrac{f-{\mathcal T}_{24}(f)}{T_{24}^{-1}(f)-{\mathcal T}_{24}(f)}=
-\textstyle\chi(\frac{1}{12},-\frac{1}{4},\frac{1}{12},\frac{1}{4},-\frac{1}{4})
\dfrac{\tau_{2,2}r_{24}(\tau_{5,5})}{\tau_4\tau_5}.
\end{equation}
Similarly, starting with the relations 
\begin{equation}
\begin{array}{l}
f-{T_{34}}^{-1}(f)=
\sqrt{-1}\,\ka_{\infty}\,
\textstyle\chi(-\frac{1}{12},-\frac{1}{12},\frac{1}{4},\frac{1}{4},-\frac{1}{12})
\dfrac{\tau_3\tau_4}
      {\tau_5r_{34}(\tau_{5,5})},\\[4mm]
f-{\mathcal T}_{34}(f)=
\sqrt{-1}\,\ka_{\infty}\,
\textstyle\chi(-\frac{1}{12},-\frac{1}{12},\frac{1}{4},\frac{1}{4},-\frac{1}{12})
\dfrac{\tau_{3,3}\tau_4}
      {\tau_5r_3r_{34}(\tau_{5,5})},
\end{array}
\end{equation}
the first of which is \eqref{eq':2-2} and the second can be obtained by applying $r_3$ to the first, we get
\begin{equation}\label{l3}
{T_{34}}^{-1}(f)-{\mathcal T}_{34}(f)
=\ka_0\,\ka_{\infty}\,\textstyle
\chi(\frac{1}{6},\frac{1}{6},\frac{1}{6},\frac{1}{6},-\frac{1}{6})\,
\dfrac{{\tau_4}^2}{r_{34}(\tau_{5,5})r_3r_{34}(\tau_{5,5})}. 
\end{equation}
Then the bilinear relation \eqref{bi:abk} with $(a,b,k)=(4,5,3)$ gives
\begin{equation}
\ka_0\,\dfrac{f-{\mathcal T}_{34}(f)}{T_{34}^{-1}(f)-{\mathcal T}_{34}(f)}=
\sqrt{-1}\,\textstyle\chi(-\frac{1}{4},-\frac{1}{4},\frac{1}{12},\frac{1}{12},\frac{1}{12})
\dfrac{\tau_{3,3}r_{34}(\tau_{5,5})}{\tau_4\tau_5}.
\end{equation}
Noticing that we have 
\begin{equation}\label{formula:rab_on_tau}
r_{14}(\tau_{5,5})=r_{45}(\tau_{1,1}),\quad
r_{24}(\tau_{5,5})=r_{45}(\tau_{2,2}),\quad
r_{34}(\tau_{5,5})=r_{45}(\tau_{3,3}),
\end{equation}
we find that the left-hand side of the first relation in Lemma \ref{Lemma:constraint} can be expressed by 
\begin{equation}
\begin{array}{l}
\dfrac{1}{t_{45,1}}\dfrac{\tau_{1,1}r_{45}(\tau_{1,1})}{\tau_4\tau_5}
+\dfrac{1}{t_{45,2}}\dfrac{\tau_{2,2}r_{45}(\tau_{2,2})}{\tau_4\tau_5}
+\dfrac{1}{t_{45,3}}\dfrac{\tau_{3,3}r_{45}(\tau_{3,3})}{\tau_4\tau_5}\\[4mm]
\hskip40mm
=r_1(\varphi_{45,1})+r_2(\varphi_{45,2})+r_3(\varphi_{45,3}). 
\end{array}
\end{equation}
From the definition \eqref{def_1:varphi} and \eqref{def_2:varphi} of $\varphi_{ab,k}$ we have 
\begin{equation}
r_1(\varphi_{45,1})+r_2(\varphi_{45,2})+r_3(\varphi_{45,3})
=-\al, 
\end{equation}
where $\al=-\dfrac{1}{2}(\theta_1+\theta_2+\ka_o+\ka_1+\ka_{\infty}-1)$. Then we obtain the first relation in Lemma \ref{Lemma:constraint}. 
\end{proof}

We have the following relation from Lemma \ref{Lemma:constraint}, which can be verified by a direct computation.

\begin{proposition}
The  $f$-variable satisfies the following relation 
\begin{equation}\label{similarity_local}
\begin{array}{l}
\ka_{\infty}f=
\theta_1\dfrac{({\mathcal T}_{14}(f)-f)(f-{T_{14}}^{-1}(f))}
      {{\mathcal T}_{14}(f)-{T_{14}}^{-1}(f)}\\[4mm]
\hskip10mm
+\theta_2\dfrac{({\mathcal T}_{24}(f)-f)(f-{T_{24}}^{-1}(f))}
      {{\mathcal T}_{24}(f)-{T_{24}}^{-1}(f)}
+\ka_0\dfrac{({\mathcal T}_{34}(f)-f)(f-{T_{34}}^{-1}(f))}
      {{\mathcal T}_{34}(f)-{T_{34}}^{-1}(f)}.
\end{array}
\end{equation}
\end{proposition}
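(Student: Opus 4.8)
The plan is to derive \eqref{similarity_local} purely algebraically from the two relations in Lemma \ref{Lemma:constraint}, which we may assume. Introduce the abbreviations $u_i = T_{i4}^{-1}(f)$ and $v_i = \mathcal{T}_{i4}(f)$ for $i=1,2,3$, together with the coefficients $m_i = (\theta_1,\theta_2,\ka_0)$, and write
\begin{equation*}
\Phi_i = m_i\,\frac{f - v_i}{u_i - v_i},
\end{equation*}
so that the first relation of Lemma \ref{Lemma:constraint} reads $\Phi_1+\Phi_2+\Phi_3 = A$ with $A=\tfrac{1}{2}(\theta_1+\theta_2+\ka_0+\ka_1+\ka_\infty-1)$, and the second reads $\Phi_1 u_1 + \Phi_2 u_2 + \Phi_3 u_3 = A' f$ with $A'=\tfrac{1}{2}(\theta_1+\theta_2+\ka_0+\ka_1-\ka_\infty-1)=A-\ka_\infty$. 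Hence $A - A' = \ka_\infty$.

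First I would observe that the right-hand side of \eqref{similarity_local} can be rewritten termwise: since
\begin{equation*}
\frac{(v_i-f)(f-u_i)}{v_i-u_i} = \frac{(f-v_i)(f-u_i)}{u_i - v_i} = \frac{f-v_i}{u_i-v_i}\,(f-u_i) = \frac{\Phi_i}{m_i}\,(f - u_i),
\end{equation*}
the sum $\sum_i m_i \frac{(v_i-f)(f-u_i)}{v_i-u_i}$ equals $\sum_i \Phi_i (f - u_i) = f\sum_i \Phi_i - \sum_i \Phi_i u_i$. Substituting the two identities from Lemma \ref{Lemma:constraint} gives $f\cdot A - A' f = (A - A')f = \ka_\infty f$, which is exactly the left-hand side of \eqref{similarity_local}. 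So the "direct computation" is really just this two-line bookkeeping once the reformulation of each summand is spotted.

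The only point requiring a moment of care is the algebraic rearrangement $m_i\frac{(v_i-f)(f-u_i)}{v_i-u_i} = \Phi_i(f-u_i)$: one must check that $(v_i-f)/(v_i-u_i) = (f-v_i)/(u_i-v_i)$, which is immediate (multiply numerator and denominator by $-1$), and that this is precisely the factor appearing in $\Phi_i$. There is no genuine obstacle here; the main thing to get right is matching signs and the bookkeeping of which relation supplies $\sum\Phi_i$ versus $\sum\Phi_i u_i$, and confirming $A - A' = \ka_\infty$ from the stated right-hand sides in Lemma \ref{Lemma:constraint}. I would present the proof as: expand the right side of \eqref{similarity_local} using the elementary identity above, collect into $f\sum_i\Phi_i - \sum_i\Phi_i u_i$, invoke Lemma \ref{Lemma:constraint} for both sums, and simplify $(A-A')f$ to $\ka_\infty f$.
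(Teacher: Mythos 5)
Your proposal is correct and is essentially the paper's own argument: the paper simply states that the proposition follows from Lemma \ref{Lemma:constraint} by a direct computation, and your rewriting of each summand as $\Phi_i\,(f-u_i)$, followed by subtracting the two identities of the lemma to get $(A-A')f=\ka_{\infty}f$, is precisely that computation spelled out.
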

We finally derive additional relations that will be used to derive the similarity constraint \eqref{eqn_of_power_fnct}.

\begin{proposition}
The $f$-variable satisfies the following relations 
\begin{equation}\label{constraint:2_local}
\begin{array}{l}
\widetilde{\chi}\,\ka_{\infty}\left(
\dfrac{\theta_1(t_2-1)}{(t_1-1)({T_{14}}^{-1}(f)-{\mathcal T}_{14}(f))}
+\dfrac{\theta_2}{{T_{24}}^{-1}(f)-{\mathcal T}_{24}(f)}\right.\\[4mm]
\hskip40mm\left.
-\dfrac{\ka_0(t_2-1)}{{T_{34}}^{-1}(f)-{\mathcal T}_{34}(f)}\right)
=\dfrac{1}{{T_{54}}^{-1}(f)},\\[4mm]
\widetilde{\chi}\,\ka_{\infty}\left(
\dfrac{\theta_1(t_2-1)}{(t_1-1)(1/{T_{14}}^{-1}(f)-1/{\mathcal T}_{14}(f))}\right.\\[4mm]
\hskip20mm\left.
+\dfrac{\theta_2}{1/{T_{24}}^{-1}(f)-1/{\mathcal T}_{24}(f)}
-\dfrac{\ka_0(t_2-1)}{1/{T_{34}}^{-1}(f)-1/{\mathcal T}_{34}(f)}\right)
={\mathcal T}_{54}(f), 
\end{array}
\end{equation}
where 
\begin{equation}\label{eqn:wide_chi}
\widetilde{\chi}=\textstyle\chi(-\frac{1}{6},-\frac{1}{6},\frac{1}{2},-\frac{1}{2},-\frac{1}{6}).
\end{equation}
\end{proposition}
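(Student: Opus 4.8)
The plan is to reuse the same bilinearization technique that has already appeared in the proofs of Propositions \ref{prop:cr} and \ref{Prop:H1} and Lemma \ref{Lemma:constraint}: express each difference of $f$-values as a monomial in the $\tau$-variables times a power product in the $t_i$, then recognize the right-hand side as a single $\tau$-ratio. As before, I would first note that the second relation of \eqref{constraint:2_local} follows from the first by applying $r_5$ (which sends $f=\tau_{5,5}/\tau_5$ to $1/f$, swaps ${T_{54}}^{-1}(f)$ and ${\mathcal T}_{54}(f)$ appropriately, and fixes the $t_i$), so it suffices to prove the first.

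For the first relation, I would start from the three expressions \eqref{l1}, \eqref{l2}, \eqref{l3} already computed in the proof of Lemma \ref{Lemma:constraint}, namely
\begin{equation}
\begin{array}{l}
{T_{14}}^{-1}(f)-{\mathcal T}_{14}(f)
=\theta_1\ka_{\infty}\,\textstyle\chi(\frac{1}{6},-\frac{1}{6},-\frac{1}{2},\frac{1}{6},\frac{1}{6})\,
\dfrac{{\tau_4}^2}{r_{14}(\tau_{5,5})r_1r_{14}(\tau_{5,5})},\\[3mm]
{T_{24}}^{-1}(f)-{\mathcal T}_{24}(f)
=\theta_2\ka_{\infty}\,\textstyle\chi(-\frac{1}{6},\frac{1}{6},\frac{1}{6},-\frac{1}{2},\frac{1}{6})\,
\dfrac{{\tau_4}^2}{r_{24}(\tau_{5,5})r_2r_{24}(\tau_{5,5})},\\[3mm]
{T_{34}}^{-1}(f)-{\mathcal T}_{34}(f)
=\ka_0\ka_{\infty}\,\textstyle\chi(\frac{1}{6},\frac{1}{6},\frac{1}{6},\frac{1}{6},-\frac{1}{6})\,
\dfrac{{\tau_4}^2}{r_{34}(\tau_{5,5})r_3r_{34}(\tau_{5,5})}.
\end{array}
\end{equation}
Dividing $\theta_1(t_2-1)/(t_1-1)$, $\theta_2$, and $-\ka_0(t_2-1)$ by these respectively, and using the identities \eqref{formula:rab_on_tau} ($r_{i4}(\tau_{5,5})=r_{45}(\tau_{i,i})$ for $i=1,2,3$), each of the three summands on the left becomes, up to an explicit $\chi$-factor, the product $r_i r_{14}(\tau_{5,5})\,r_i\big({\rm something}\big)/{\tau_4}^2$ — more precisely a monomial of the form $\dfrac{r_i(\tau_{5,5})\,r_i r_{i4}(\tau_{5,5})}{{\tau_4}^2}$ times a power product in the $t$'s. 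The next step is to verify, by comparing the exponent vectors in \eqref{chi}, that after multiplying by $\widetilde\chi\,\ka_{\infty}$ all three summands carry exactly the same overall $t$-power product, so that the sum collapses to a single factor times $r_1(\cdots)+r_2(\cdots)+r_3(\cdots)$. That combination is then identified, exactly as at the end of Lemma \ref{Lemma:constraint} where $r_1(\varphi_{45,1})+r_2(\varphi_{45,2})+r_3(\varphi_{45,3})=-\al$ appeared, with a $\tau$-monomial equal to $1/{T_{54}}^{-1}(f)$. Concretely, $1/{T_{54}}^{-1}(f)$ should be writable as a ratio $\tau_5 r_{45}(\tau_{5,5})$-type expression divided by ${\tau_4}^2$ using ${T_{54}}=r_5r_1r_2r_3r_4 r_{45}$ composed with the action formulas \eqref{r_on_tau} and the $r_{ab}$-action on $\tau$; matching it against the common factor extracted from the left side closes the computation.

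The step I expect to be the main obstacle is the bookkeeping of the $t$-exponents: one must check that the five-component vectors $e=(e_1,\dots,e_5)$ attached to $\widetilde\chi$, to the reciprocals of the $\chi$-factors in \eqref{l1}--\eqref{l3}, to the extra rational prefactors $(t_2-1)/(t_1-1)$, $1$, $(t_2-1)$, and to the $\chi$-factor hidden inside ${T_{54}}^{-1}(f)$ all add up consistently, simultaneously for all three terms. This is purely linear-algebraic but error-prone, since it involves the half-integer entries appearing throughout the $t_{ab,k}$ and $t_{ab}$ tables; I would organize it as a single table of exponent sums. The remaining ingredient — that the $\tau$-combination $r_1(\varphi)+r_2(\varphi)+r_3(\varphi)$ produced here equals the relevant polynomial (and hence the stated $\tau$-monomial) — is a direct consequence of the definitions \eqref{def_1:varphi}, \eqref{def_2:varphi} of $\varphi_{ab,k}$ together with $\al=-\tfrac12(\theta_1+\theta_2+\ka_0+\ka_1+\ka_{\infty}-1)$, just as in Lemma \ref{Lemma:constraint}, so it should go through without new ideas. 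Since the statement asserts the identity can be ``verified by a direct computation,'' I would present the proof as: (i) reduce to the first relation via $r_5$; (ii) substitute \eqref{l1}--\eqref{l3} and \eqref{formula:rab_on_tau}; (iii) check the exponents collapse to a common factor; (iv) identify the resulting $\tau$-monomial with $1/{T_{54}}^{-1}(f)$.
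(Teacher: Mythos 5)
Your outline matches the paper's proof only in its first two steps (reducing the second relation to the first via $r_5$, and substituting \eqref{l1}--\eqref{l3} together with \eqref{formula:rab_on_tau}); the mechanism you propose for summing the three terms is where the argument breaks down. After the substitution, each summand is a $t$-power product times $r_{45}\!\left(\tau_k\tau_{k,k}/\tau_5^2\right)$, $k=1,2,3$, and, including the prefactors $(t_2-1)/(t_1-1)$, $1$, $-(t_2-1)$, the three coefficients are ${t_1}^{-1/3}(t_2-1)^{1/3}(t_1-t_2)^{-1/3}$, ${t_2}^{-1/3}(t_1-1)^{1/3}(t_1-t_2)^{-1/3}$ and $-{t_1}^{-1/3}{t_2}^{-1/3}(t_1-1)^{1/3}(t_2-1)^{1/3}$. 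These are three \emph{different} power products, so your claim that after multiplying by $\widetilde{\chi}\,\ka_{\infty}$ ``all three summands carry exactly the same overall $t$-power product'' is false, and the sum does not factor as a common monomial times $r_1(\cdot)+r_2(\cdot)+r_3(\cdot)$. Moreover, the identity $r_1(\varphi_{45,1})+r_2(\varphi_{45,2})+r_3(\varphi_{45,3})=-\al$ that closed Lemma \ref{Lemma:constraint} cannot play the role you assign to it: it produces a parameter \emph{constant}, whereas the right-hand side here is the $\tau$-ratio $1/{T_{54}}^{-1}(f)$, and it concerns the bilinears $\tau_{k,k}r_{45}(\tau_{k,k})/(\tau_4\tau_5)$ rather than the bilinears $r_{45}(\tau_k\tau_{k,k})/\tau_4^2$ that actually appear in this computation.

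The missing ingredient is the bilinear relation \eqref{bi:constraint}, which is a linear relation among $\tau_1\tau_{1,1}$, $\tau_2\tau_{2,2}$, $\tau_3\tau_{3,3}$ and $\tau_5\tau_{5,5}$ whose coefficients are exactly the three power products listed above (it encodes, through \eqref{r_on_tau}, the trivial identity $\tfrac{q_1}{t_1}+\tfrac{q_2}{t_2}-\bigl(\tfrac{q_1}{t_1}+\tfrac{q_2}{t_2}-1\bigr)=1$). Applying $r_{45}$ to \eqref{bi:constraint} and dividing by $\tau_4^2=r_{45}(\tau_5^2)$ collapses the left-hand side of the first relation into $r_{45}\!\left(\tau_5\tau_{5,5}/\tau_5^2\right)=r_{45}(f)=1/{T_{54}}^{-1}(f)$, which is how the paper concludes. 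Without invoking \eqref{bi:constraint} (or an equivalent $\tau$-function relation) there is no way to turn a sum of three distinct $\tau$-bilinears into a single $\tau$-monomial, so steps (iii)--(iv) of your plan do not go through as stated; the rest of your bookkeeping strategy is fine once this relation is inserted.
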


\begin{proof}
The second relation can be obtained by applying the transformation $r_5$ to the first. Then we explain the proof on the first relation in detail. 
From \eqref{l1} and \eqref{formula:rab_on_tau} we have  
\begin{equation}
 \dfrac{\widetilde{\chi}\ka_{\infty}\theta_1}{{T_{14}}^{-1}(f)-{\mathcal T}_{14}(f)}=
{t_1}^{-1/3}(t_1-1)(t_2-1)^{-2/3}(t_1-t_2)^{-1/3}
r_{45}\left(\dfrac{\tau_1\tau_{1,1}}{\tau_5^2}\right).
\end{equation}
Similarly, we obtain from \eqref{l2}, \eqref{l3} and \eqref{formula:rab_on_tau} 
\begin{equation}
\begin{array}{l}
\dfrac{\widetilde{\chi}\ka_{\infty}\theta_2}{{T_{24}}^{-1}(f)-{\mathcal T}_{24}(f)}=
{t_2}^{-1/3}(t_1-1)^{1/3}(t_1-t_2)^{-1/3}
r_{45}\left(\dfrac{\tau_2\tau_{2,2}}{\tau_5^2}\right),\\[4mm]
\dfrac{\widetilde{\chi}\ka_{\infty}\ka_0}{{T_{34}}^{-1}(f)-{\mathcal T}_{34}(f)}
={t_1}^{-1/3}{t_2}^{-1/3}(t_1-1)^{1/3}(t_2-1)^{-2/3}
r_{45}\left(\dfrac{\tau_3\tau_{3,3}}{\tau_5^2}\right). 
\end{array}
\end{equation}
Then the bilinear relation \eqref{bi:constraint} gives 
\begin{equation}
\begin{array}{l}
\widetilde{\chi}\ka_{\infty}\left(
\dfrac{\theta_1(t_2-1)}{({T_{14}}^{-1}(f)-{\mathcal T}_{14}(f))(t_1-1)}
+\dfrac{\theta_2}{{T_{24}}^{-1}(f) - {\mathcal T}_{24}(f)}
-\dfrac{\ka_0(t_2-1)}{{T_{34}}^{-1}(f)-{\mathcal T}_{34}(f)}\right)\\[4mm]
\hskip10mm
=r_{45}\left(\dfrac{\tau_5\tau_{5,5}}{\tau_5^2}\right)
=r_{45}(f)=\dfrac{1}{{T_{54}}^{-1}(f)}, 
\end{array}
\end{equation}
which is the first relation of \eqref{constraint:2_local}. 
\end{proof}

\subsection{Restriction to a subsystem}\label{subsection:Restriction}
Here we introduce a subgroup of $G$ (defined in \eqref{subgroup_G}) and focus on the action of the subgroup. 
Considering some elements of infinite order of the subgroup we see that the relations for the $f$-variable obtained in the previous subsection yield a system of partial difference equations together with some constraints. 

Let us define the transformations $s_i^k\,(i=0,1\,;\,k=0,1,2,3)$ and $\pi_k\,(k=0,1,2,3)$ by 
\begin{equation}
\begin{array}{l}
s_1^k=r_k\quad(k=1,2,3),\qquad s_0^0=r_5,\\[1mm]
\pi_1=r_2r_3r_5r_{14},\quad\pi_2=r_1r_3r_5r_{24},\quad
\pi_3=r_1r_2r_5r_{34},\quad\pi_0=r_1r_2r_3r_{54},\\[1mm]
s_0^k=\pi_ks_1^k\pi_k\quad(k=1,2,3),\qquad s_1^0=\pi_0s_0^0\pi_0.
\end{array}
\end{equation}
Then these elements satisfy the following relations 
\begin{equation}
\begin{array}{l}
(s_i^k)^2=1,\qquad s_i^ks_j^l=s_j^ls_i^k\quad(k\ne l),\\[1mm]
(\pi_k)^2=1,\qquad\pi_k\pi_l=\pi_l\pi_k\quad(k\ne l),\\[1mm]
\pi_ks_i^l=s_i^l\pi_k\quad(k\ne l).
\end{array}
\end{equation}
This means that for each $k\in\{0,1,2,3\}$ the group generated by the transformations $s_0^k$ and $s_1^k$ is isomorphic to the affine Weyl group of type $A_1^{(1)}$ and $\pi_k$ plays the role of the automorphism of the Dynkin diagram; 
\begin{equation}
\br{s_0^k,s_1^k}\cong W(A_1^{(1)})\quad(k=0,1,2,3),\qquad
\br{s_0^k,s_1^k,\pi_k}\cong\widetilde{W}(A_1^{(1)})\quad(k=0,1,2,3).
\end{equation}
Also, these four groups $\br{s_0^k,s_1^k,\pi_k}\,(k=0,1,2,3)$ commute with each other. Let us introduce the parameters $\al_i^k\,(i=0,1\,;\,k=0,1,2,3,4)$ by 
\begin{equation}
\begin{array}{l}
\al_1^1=\theta_1,\quad\al_1^2=\theta_2,\quad
\al_1^3=\ka_0,\quad\al_1^4=\ka_1,\quad
\al_0^0=\ka_{\infty},\\[1mm]
\al_0^k=1-\al_1^k\quad(k=1,2,3,4),\qquad\al_1^0=1-\al_0^0.
\end{array}
\end{equation}
Then the action of the transformations $s_i^k$ and $\pi_k$ on these parameters is given by 
\begin{equation}\label{act_s:parameters}
\begin{array}{lll}
s_0^k&:&
\al_0^k\mapsto -\al_0^k,\quad
\al_1^k\mapsto\al_1^k+2\al_0^k,\\[1mm]
s_1^k&:&
\al_1^k\mapsto -\al_1^k,\quad
\al_0^k\mapsto\al_0^k+2\al_1^k
\end{array}
\end{equation}
and
\begin{equation}\label{act_pi:parameters}
\pi_k\,\,\,:\,\,\,(\al_0^k,\al_1^k,\al_0^4,\al_1^4)
\mapsto(\al_1^k,\al_0^k,\al_1^4,\al_0^4)\quad(k=0,1,2,3).
\end{equation}

Let us consider the elements 
\begin{equation}
\rho_k=\pi_ks_1^k\quad(k=0,1,2,3),
\end{equation}
of the group $\br{s_0^k,s_1^k,\pi_k}$. Each of them acts on the parameters $\al_0^k$ and $\al_1^k$ as a parallel translation: 
\begin{equation}
\rho_k\,\,\,:\,\,\,(\al_0^k,\al_1^k)\mapsto(\al_0^k+1,\al_1^k-1).
\end{equation}
Note that we have 
\begin{equation}
\begin{array}{ll}
\rho_k(f)={T_{k4}}^{-1}(f),&{\rho_k}^{-1}(f)={\mathcal T}_{k4}(f)
\qquad(k=1,2,3),\\[1mm]
\rho_0(f)={\mathcal T}_{54}(f),&{\rho_0}^{-1}(f)={T_{54}}^{-1}(f).
\end{array}
\end{equation}
Then the relations given in Proposition \ref{prop:cr} can be expressed by 
\begin{equation}\label{cr_by_rho}
\begin{array}{l}
\dfrac{f-\rho_1(f)}{\rho_1(f)-\rho_1\rho_2(f)}\,
\dfrac{\rho_1\rho_2(f)-\rho_2(f)}{\rho_2(f)-f}
=\dfrac{1-t_2}{1-t_1},\\[4mm]
\dfrac{f-\rho_2(f)}{\rho_2(f)-\rho_2\rho_3(f)}\,
\dfrac{\rho_2\rho_3(f)-\rho_3(f)}{\rho_3(f)-f}
=\dfrac{1}{1-t_2},\\[4mm]
\dfrac{f-\rho_3(f)}{\rho_3(f)-\rho_3\rho_1(f)}\,
\dfrac{\rho_3\rho_1(f)-\rho_1(f)}{\rho_1(f)-f}
=1-t_1. 
\end{array}
\end{equation}
Let us introduce an infinite family of variables $f_{l_1,l_2,l_3,l_0}\,(l_1,l_2,l_3,l_0\in\bbZ)$ by 
\begin{equation}
f_{l_1,l_2,l_3,l_0}={\rho_1}^{l_1}{\rho_2}^{l_2}{\rho_3}^{l_3}{\rho_0}^{l_0}(f)
\quad(l_1,l_2,l_3,l_0\in\bbZ). 
\end{equation}
By using the results in the previous subsection, we obtain several difference equations for the variables $f_{l_1,l_2,l_3,l_0}$. 
First, we have the following statement from \eqref{cr_by_rho} and \eqref{H1_local}. 

\begin{theorem}
The variables $f_{l_1,l_2,l_3,l_0}$ satisfy the partial difference equations 
\begin{equation}\label{cr:4d}
\begin{array}{l}
\dfrac{f_{l_1,l_2,l_3,l_0}-f_{l_1+1,l_2,l_3,l_0}}{f_{l_1+1,l_2,l_3,l_0}-f_{l_1+1,l_2+1,l_3,l_0}}\,
\dfrac{f_{l_1+1,l_2+1,l_3,l_0}-f_{l_1,l_2+1,l_3,l_0}}{f_{l_1,l_2+1,l_3,l_0}-f_{l_1,l_2,l_3,l_0}}
=\dfrac{1-t_2}{1-t_1},\\[4mm]
\dfrac{f_{l_1,l_2,l_3,l_0}-f_{l_1,l_2+1,l_3,l_0}}{f_{l_1,l_2+1,l_3,l_0}-f_{l_1,l_2+1,l_3+1,l_0}}\,
\dfrac{f_{l_1,l_2+1,l_3+1,l_0}-f_{l_1,l_2,l_3+1,l_0}}{f_{l_1,l_2,l_3+1,l_0}-f_{l_1,l_2,l_3,l_0}}
=\dfrac{1}{1-t_2},\\[4mm]
\dfrac{f_{l_1,l_2,l_3,l_0}-f_{l_1,l_2,l_3+1,l_0}}{f_{l_1,l_2,l_3+1,l_0}-f_{l_1+1,l_2,l_3+1,l_0}}\,
\dfrac{f_{l_1+1,l_2,l_3+1,l_0}-f_{l_1+1,l_2,l_3,l_0}}{f_{l_1+1,l_2,l_3,l_0}-f_{l_1,l_2,l_3,l_0}}
=1-t_1, 
\end{array}
\end{equation}
and 
\begin{equation}\label{H1}
\begin{array}{l}
\left(\dfrac{1}{f_{l_1,l_2,l_3,l_0}}-\dfrac{1}{f_{l_1+1,l_2,l_3,l_0}}\right)
\left(f_{l_1,l_2,l_3,l_0+1}-f_{l_1+1,l_2,l_3,l_0+1}\right)\\[2mm]
\hskip40mm
=(\al_0^0+l_0)(\al_0^0+l_0+1)
\chi(-\frac{1}{6},-\frac{1}{6},-\frac{1}{2},\frac{1}{2},-\frac{1}{6}),\\[2mm]
\left(\dfrac{1}{f_{l_1,l_2,l_3,l_0}}-\dfrac{1}{f_{l_1,l_2+1,l_3,l_0}}\right)
\left(f_{l_1,l_2,l_3,l_0+1}-f_{l_1,l_2+1,l_3,l_0+1}\right)\\[2mm]
\hskip40mm
=(\al_0^0+l_0)(\al_0^0+l_0+1)
\chi(-\frac{1}{6},-\frac{1}{6},\frac{1}{2},-\frac{1}{2},-\frac{1}{6}),\\[2mm]
\left(\dfrac{1}{f_{l_1,l_2,l_3,l_0}}-\dfrac{1}{f_{l_1,l_2,l_3+1,l_0}}\right)
\left(f_{l_1,l_2,l_3,l_0+1}-f_{l_1,l_2,l_3+1,l_0+1}\right)\\[2mm]
\hskip40mm
=-(\al_0^0+l_0)(\al_0^0+l_0+1)
\chi(-\frac{1}{6},-\frac{1}{6},\frac{1}{2},\frac{1}{2},-\frac{1}{6}). 
\end{array}
\end{equation}
\end{theorem}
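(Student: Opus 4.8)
The plan is to obtain both families of equations by applying a single element of $\br{\rho_0,\rho_1,\rho_2,\rho_3}$ to relations that have already been established. Put $g=\rho_1^{l_1}\rho_2^{l_2}\rho_3^{l_3}\rho_0^{l_0}$, so that $g(f)=f_{l_1,l_2,l_3,l_0}$. Because the four groups $\br{s_0^k,s_1^k,\pi_k}$ commute, the $\rho_k$ commute with one another, and hence $g\,\rho_1^{a_1}\rho_2^{a_2}\rho_3^{a_3}\rho_0^{a_0}(f)=f_{l_1+a_1,l_2+a_2,l_3+a_3,l_0+a_0}$ for all $a_i\in\bbZ$; in particular $g\rho_1(f)=f_{l_1+1,l_2,l_3,l_0}$, $g\rho_1\rho_2(f)=f_{l_1+1,l_2+1,l_3,l_0}$, $g\rho_0(f)=f_{l_1,l_2,l_3,l_0+1}$, and so on.

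For \eqref{cr:4d} I would apply $g$ to the three identities \eqref{cr_by_rho}. Their right-hand sides $\tfrac{1-t_2}{1-t_1}$, $\tfrac{1}{1-t_2}$, $1-t_1$ depend only on $t_1,t_2$, which are invariant under $G$ and hence under $g$, while the left-hand sides become the corresponding cross-ratios of the shifted variables by the substitution rule above. This gives \eqref{cr:4d} at once.

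For \eqref{H1} I would first rewrite the relations \eqref{H1_local} of Proposition \ref{Prop:H1} in terms of the $\rho_k$: one has ${T_{k4}}^{-1}(f)=\rho_k(f)$ and ${\mathcal T}_{54}(f)=\rho_0(f)$, and since the Schlesinger transformations commute on $f$ and act on the orbit $\{f_{l_1,l_2,l_3,l_0}\}$ as translations determined by their action on the parameters, ${\mathcal T}_{5k}$ increments both $l_k$ (it sends $\al_0^k=1-\al_1^k$ to $\al_0^k+1$) and $l_0$ (it sends $\al_0^0$ to $\al_0^0+1$), so that ${\mathcal T}_{5k}(f)=\rho_0\rho_k(f)$ for $k=1,2,3$. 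Thus, for instance, the first relation of \eqref{H1_local} reads $\left(\tfrac{1}{f}-\tfrac{1}{\rho_1(f)}\right)\left(\rho_0(f)-\rho_0\rho_1(f)\right)=\ka_{\infty}(\ka_{\infty}+1)\,\chi(-\tfrac16,-\tfrac16,-\tfrac12,\tfrac12,-\tfrac16)$. Now apply $g$: the factor $\chi(\cdots)$ is unchanged, since it involves only $t_1,t_2$; the factor $\ka_{\infty}(\ka_{\infty}+1)=\al_0^0(\al_0^0+1)$ is moved only by $\rho_0$, which acts as $\al_0^0\mapsto\al_0^0+1$ while $\rho_1,\rho_2,\rho_3$ fix $\al_0^0$, so it becomes $(\al_0^0+l_0)(\al_0^0+l_0+1)$; and the $f$-terms become $f_{l_1,l_2,l_3,l_0}$, $f_{l_1+1,l_2,l_3,l_0}$, $f_{l_1,l_2,l_3,l_0+1}$, $f_{l_1+1,l_2,l_3,l_0+1}$. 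This is exactly the first relation of \eqref{H1}; the second and third follow in the same way from the remaining relations of \eqref{H1_local}, the signs being inherited from Proposition \ref{Prop:H1}.

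The one point that needs genuine care is the dictionary between the Schlesinger transformations ${T_{ab}},{\mathcal T}_{ab}$ and the translations $\rho_k$ on the orbit of $f$ — in particular the identities ${T_{k4}}^{-1}(f)=\rho_k(f)$, ${\mathcal T}_{54}(f)=\rho_0(f)$ and ${\mathcal T}_{5k}(f)=\rho_0\rho_k(f)$ — together with the bookkeeping of which parameter is shifted: among $\al_0^0,\ldots$, only $\al_0^0=\ka_{\infty}$ is moved (by $+1$ per application of $\rho_0$), and this is precisely why the right-hand side of \eqref{H1} depends on $l_0$ but not on $l_1,l_2,l_3$. Granting this, both \eqref{cr:4d} and \eqref{H1} follow from relations already in hand by the single substitution $f\mapsto g(f)$, with no further computation.
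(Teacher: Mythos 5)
Your proposal is correct and takes essentially the same route as the paper: the theorem is obtained there precisely by applying the commuting translations $\rho_1^{l_1}\rho_2^{l_2}\rho_3^{l_3}\rho_0^{l_0}$ to the relations \eqref{cr_by_rho} and \eqref{H1_local}, using the $G$-invariance of $t_1,t_2$ and the shift $\al_0^0\mapsto\al_0^0+l_0$ effected only by $\rho_0$. The dictionary you highlight, ${T_{k4}}^{-1}(f)=\rho_k(f)$, ${\mathcal T}_{54}(f)=\rho_0(f)$ and ${\mathcal T}_{5k}(f)=\rho_0\rho_k(f)$, is exactly the identification the paper itself uses (the composite-shift cases implicitly) when rewriting \eqref{cr_local} as \eqref{cr_by_rho} and passing from \eqref{H1_local} to \eqref{H1}.
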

We also get the following from \eqref{similarity_local} and \eqref{constraint:2_local}. 

\begin{theorem}
The variables $f_{l_1,l_2,l_3,l_0}$ satisfy the difference equations 
\begin{equation}\label{similarity}
\begin{array}{l}
(\al_0^0+l_0)\,f_{l_1,l_2,l_3,l_0}=
(\al_1^1-l_1)\dfrac{(f_{l_1-1,l_2,l_3,l_0}-f_{l_1,l_2,l_3,l_0})(f_{l_1,l_2,l_3,l_0}-f_{l_1+1,l_2,l_3,l_0})}
      {f_{l_1-1,l_2,l_3,l_0}-f_{l_1+1,l_2,l_3,l_0}}\\[4mm]
\hskip15mm
+(\al_1^2-l_2)\dfrac{(f_{l_1,l_2-1,l_3,l_0}-f_{l_1,l_2,l_3,l_0})(f_{l_1,l_2,l_3,l_0}-f_{l_1,l_2+1,l_3,l_0})}
      {f_{l_1,l_2-1,l_3,l_0}-f_{l_1,l_2+1,l_3,l_0}}\\[4mm]
\hskip15mm
+(\al_1^3-l_3)\dfrac{(f_{l_1,l_2,l_3-1,l_0}-f_{l_1,l_2,l_3,l_0})(f_{l_1,l_2,l_3,l_0}-f_{l_1,l_2,l_3+1,l_0})}
      {f_{l_1,l_2,l_3-1,l_0}-f_{l_1,l_2,l_3+1,l_0}}, 
\end{array}
\end{equation}
and 
\begin{equation}\label{constraint:2}
\begin{array}{l}
\dfrac{1}{f_{l_1,l_2,l_3,l_0-1}}=
\chi\,(\al_0^0+l_0)\left(
\dfrac{(\al_1^1-l_1)(t_2-1)}
{(t_1-1)(f_{l_1+1,l_2,l_3,l_0}-f_{l_1-1,l_2,l_3,l_0})}\right.\\[4mm]
\hskip30mm\left.
+\dfrac{(\al_1^2-l_2)}{f_{l_1,l_2+1,l_3,l_0}-f_{l_1,l_2-1,l_3,l_0}}
-\dfrac{(\al_1^3-l_3)(t_2-1)}{f_{l_1,l_2,l_3+1,l_0}-f_{l_1,l_2,l_3-1,l_0}}\right),\\[5mm]
f_{l_1,l_2,l_3,l_0+1}=
\chi\,(\al_0^0+l_0)\left(
\dfrac{(\al_1^1-l_1)(t_2-1)}
{(t_1-1)(1/f_{l_1+1,l_2,l_3,l_0}-1/f_{l_1-1,l_2,l_3,l_0})}\right.\\[4mm]
\hskip40mm
+\dfrac{(\al_1^2-l_2)}{1/f_{l_1,l_2+1,l_3,l_0}-1/f_{l_1,l_2-1,l_3,l_0}}\\[4mm]
\hskip50mm\left.
-\dfrac{(\al_1^3-l_3)(t_2-1)}{1/f_{l_1,l_2,l_3+1,l_0}-1/f_{l_1,l_2,l_3-1,l_0}}\right). 
\end{array}
\end{equation}
\end{theorem}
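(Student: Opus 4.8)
The plan is to obtain both \eqref{similarity} and \eqref{constraint:2} by pushing the ``local'' relations \eqref{similarity_local} and \eqref{constraint:2_local} through the abelian group of translations generated by $\rho_1,\rho_2,\rho_3,\rho_0$. Fix $(l_1,l_2,l_3,l_0)\in\bbZ^4$ and set $W={\rho_1}^{l_1}{\rho_2}^{l_2}{\rho_3}^{l_3}{\rho_0}^{l_0}$, so that $W(f)=f_{l_1,l_2,l_3,l_0}$ by definition. Since $W$ lies in the symmetry group $G$, applying $W$ to any identity among $f$, its $\rho$-shifts, the parameters and $t_1,t_2$ produces another valid identity; the content of the proof is simply to read off what that identity says, after re-expressing the $\rho$-shifts of $f$ as $f_{l_1',l_2',l_3',l_0'}$ and the parameters via the $\al_i^k$.

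For \eqref{similarity}, I would start from \eqref{similarity_local} and rewrite it with the dictionary $\rho_k(f)={T_{k4}}^{-1}(f)$, ${\rho_k}^{-1}(f)={\mathcal T}_{k4}(f)$ ($k=1,2,3$): its three numerators and denominators then read $({\rho_k}^{-1}(f)-f)(f-\rho_k(f))$ and ${\rho_k}^{-1}(f)-\rho_k(f)$, with coefficients $(\theta_1,\theta_2,\ka_0)=(\al_1^1,\al_1^2,\al_1^3)$ and left-hand coefficient $\ka_{\infty}=\al_0^0$. Applying $W$ and using that $\rho_1,\rho_2,\rho_3,\rho_0$ commute pairwise (the groups $\br{s_0^k,s_1^k,\pi_k}$ commute, hence $W{\rho_k}^{\pm1}={\rho_k}^{\pm1}W$), one gets $W{\rho_k}^{\pm1}(f)=f_{\dots,l_k\pm1,\dots}$, while the $G$-invariance of $t_1,t_2$ means the $t$-free right side of \eqref{similarity_local} contributes nothing new. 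The remaining ingredient is the parameter shift: $\rho_k$ sends $(\al_0^k,\al_1^k)\mapsto(\al_0^k+1,\al_1^k-1)$ and fixes the $\al_i^{k'}$ with $k'\ne k$, apart from the swap $\al_0^4\leftrightarrow\al_1^4$ coming from $\pi_k$, which is harmless because $\ka_1=\al_1^4$ does not appear in \eqref{similarity_local}. Thus $W$ turns $\theta_k=\al_1^k$ into $\al_1^k-l_k$ ($k=1,2,3$) and $\ka_{\infty}=\al_0^0$ into $\al_0^0+l_0$, and substituting these back reproduces \eqref{similarity} verbatim.

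The derivation of \eqref{constraint:2} is the same manoeuvre applied to \eqref{constraint:2_local}. Under the dictionary, the denominators ${T_{k4}}^{-1}(f)-{\mathcal T}_{k4}(f)$ become $\rho_k(f)-{\rho_k}^{-1}(f)$ ($k=1,2,3$), while the two right-hand sides are ${T_{54}}^{-1}(f)={\rho_0}^{-1}(f)$ and ${\mathcal T}_{54}(f)=\rho_0(f)$; applying $W$ then yields $1/f_{l_1,l_2,l_3,l_0-1}$ and $f_{l_1,l_2,l_3,l_0+1}$ on the respective right sides, the shifted differences $f_{\dots,l_k+1,\dots}-f_{\dots,l_k-1,\dots}$ in the denominators, and the same parameter substitutions as above. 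The prefactor $\widetilde{\chi}$ of \eqref{eqn:wide_chi} is a power product of $t_1,t_2,t_1-1,t_2-1,t_1-t_2$, hence $G$-invariant, so it survives unchanged as the ``$\chi$'' in \eqref{constraint:2}; the second relation of \eqref{constraint:2} comes likewise from the second relation of \eqref{constraint:2_local}. Everything is done at the level of the $f$-variable, a ratio of $\tau$'s, so the cocycle constants $c_{\mu,\mu'}$ obstructing exact commutativity on the $\tau$-variables cancel and never enter.

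I do not expect a genuine obstacle: the argument is purely formal once one has (i) the dictionary $\rho_k\leftrightarrow {T_{k4}},{\mathcal T}_{k4}$, (ii) pairwise commutativity of $\rho_1,\rho_2,\rho_3,\rho_0$, and (iii) their action on the parameters, all of which are established earlier. The one step needing care --- the closest thing to a ``hard part'' --- is the index-and-sign bookkeeping: checking that $W$ slides past each individual ${\rho_k}^{\pm1}$ occurring inside \eqref{similarity_local}/\eqref{constraint:2_local} to give precisely the claimed shift, and that a product such as $({\mathcal T}_{k4}(f)-f)(f-{T_{k4}}^{-1}(f))$ goes to $(f_{\dots,l_k-1,\dots}-f_{\dots})(f_{\dots}-f_{\dots,l_k+1,\dots})$ with no spurious sign, so that the output matches \eqref{similarity} and \eqref{constraint:2} on the nose.
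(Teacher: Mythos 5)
Your argument is exactly the paper's (implicit) one: the theorem is obtained by applying the translations $\rho_1^{l_1}\rho_2^{l_2}\rho_3^{l_3}\rho_0^{l_0}$ to the local relations \eqref{similarity_local} and \eqref{constraint:2_local}, using the dictionary $\rho_k(f)=T_{k4}^{-1}(f)$, $\rho_k^{-1}(f)=\mathcal{T}_{k4}(f)$, $\rho_0^{\pm1}(f)=\mathcal{T}_{54}(f),T_{54}^{-1}(f)$, the commutativity of the $\rho$'s, their translation action on $(\al_0^k,\al_1^k)$, and the invariance of $t_1,t_2$ (so that $\widetilde{\chi}$ is unchanged). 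Your bookkeeping of the parameter shifts and of the harmless $\al_0^4\leftrightarrow\al_1^4$ swap is correct, so the proposal is sound and matches the paper's route.
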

Equations \eqref{similarity} and \eqref{constraint:2} can be regarded as some constraints for the system of partial difference equations \eqref{cr:4d} and \eqref{H1}. 
One can determine all the values of $f_{l_1,l_2,l_3,l_0}$ from a finite number of initial values, as we will show in Appendix \ref{section:ABS_Garnier}. 

To conclude this section, we explicitly relate the above result to the defining equations of the discrete power function on a hexagonal lattice. 

\begin{theorem}
We introduce $f_{l_1,l_2,l_3}$ by 
\begin{equation}
f_{l_1,l_2,l_3}=f_{l_1,l_2,l_3,0}. 
\end{equation}
Then it satisfies the system of partial difference equations \eqref{cr:3d} with 
\begin{equation}
x_1=\dfrac{1-t_1}{1-t_2},\quad
x_2=1-t_2,\quad
x_3=\dfrac{1}{1-t_1}
\end{equation}
together with the similarity constraint \eqref{eqn_of_power_fnct}. 
\end{theorem}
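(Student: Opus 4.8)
The plan is to recognize the statement as a direct specialization of the two theorems just established. Equations \eqref{cr:4d} and \eqref{similarity} hold for all $l_1,l_2,l_3,l_0\in\bbZ$, and, crucially, neither the cross-ratio relations \eqref{cr:4d} nor the similarity relation \eqref{similarity} involve any shift in the $l_0$-direction: every $f$ occurring in them carries the same fourth index. Hence the slice $l_0=0$ is consistent for these two families, and substituting $l_0=0$ together with the definition $f_{l_1,l_2,l_3}=f_{l_1,l_2,l_3,0}$ produces closed relations for $f_{l_1,l_2,l_3}$. (The relations \eqref{H1} and \eqref{constraint:2}, which do shift $l_0$, are simply not invoked here.)

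For the cross-ratio part, I would, in each line of \eqref{cr:4d}, merge the product of the two fractions into a single cross-ratio; the result is exactly the left-hand side of the corresponding line of \eqref{cr:3d}. Reading off the right-hand sides then gives $1/x_1=(1-t_2)/(1-t_1)$, $1/x_2=1/(1-t_2)$, $1/x_3=1-t_1$, i.e.\ $x_1=(1-t_1)/(1-t_2)$, $x_2=1-t_2$, $x_3=1/(1-t_1)$, and a one-line check confirms $x_1x_2x_3=1$, so the parameter constraint built into \eqref{cr:3d} is met.

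For the similarity constraint, I would set $l_0=0$ in \eqref{similarity}: the left side becomes $\al^0_0\,f_{l_1,l_2,l_3}$, and each of the three terms on the right has the shape $(\al^k_1-l_k)\dfrac{(f_{l_k-1}-f)(f-f_{l_k+1})}{f_{l_k-1}-f_{l_k+1}}$ with the other two indices frozen. An elementary sign rearrangement, $(\al^k_1-l_k)\dfrac{(f_{l_k-1}-f)(f-f_{l_k+1})}{f_{l_k-1}-f_{l_k+1}}=(l_k-\al^k_1)\dfrac{(f_{l_k+1}-f)(f-f_{l_k-1})}{f_{l_k+1}-f_{l_k-1}}$, turns each term into the matching term of \eqref{eqn_of_power_fnct}, which therefore holds verbatim. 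Since there is no analytic content beyond this restriction and repackaging, I expect no genuine obstacle; the only points needing care are the bookkeeping of the parameter dictionary $(\al^1_1,\al^2_1,\al^3_1,\al^0_0)=(\theta_1,\theta_2,\ka_0,\ka_{\infty})$ and keeping track of the signs in the similarity relation when passing from \eqref{similarity} to \eqref{eqn_of_power_fnct}.
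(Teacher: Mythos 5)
Your proposal is correct and follows exactly the route the paper intends (the paper states this theorem without a separate proof, as it is the immediate restriction of \eqref{cr:4d} and \eqref{similarity} to the slice $l_0=0$, neither of which involves an $l_0$-shift). Your identification of $x_1,x_2,x_3$, the check $x_1x_2x_3=1$, and the sign rearrangement matching each term of \eqref{similarity} at $l_0=0$ to the corresponding term of \eqref{eqn_of_power_fnct} are all accurate.
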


When we consider the discrete power function on a hexagonal lattice, we have in equation \eqref{eqn_of_power_fnct} $\alpha_1^1 = \alpha_1^2 = \alpha_1^3 = 0$. 
In this case, by putting $l_1=l_2=l_3=0$, we see from \eqref{eqn_of_power_fnct} that $f_{0,0,0}$ cannot be freely chosen but instead must be $0$.
Also, by giving the initial values at the six points $f_{\pm1,0,0},f_{0,\pm1,0},f_{0,0,\pm1}$ one can determine all the values of $f_{l_1,l_2,l_3}$. 
However, the way to determination of values is different from the generic case.
If we start with the initial data at the three points $f_{1,0,0}, f_{0,1,0}, f_{0,0,-1}$, as given by \eqref{initial_conditions}, we can determine the values of $f_{l_1,l_2,l_3}$ consistently only for $l_1\ge 0, l_2\ge 0, l_3\le 0$. 
In this sense, the initial value problem for the discrete power function is consistent, but different from the generic case. 
The specialization $\alpha_1^1 = \alpha_1^2 = \alpha_1^3 = 0$ for the equation \eqref{eqn_of_power_fnct} suggests that the discrete power function on a hexagonal lattice relates to the special solution of the hypergeometric type to the Garnier system in two variables.

\section{Concluding remarks\label{ConcludingRemarks}} 
In this paper, we constructed a system of partial difference equations together with certain constraints from the discrete symmetry of the Garnier system in two variables. 
This contains the difference equations \eqref{cr:3d} and \eqref{eqn_of_power_fnct}, as a sub-system, which are associated with the discrete power functions on a hexagonal lattice. 

It is known that the Garnier system in two variables admits special solutions expressed by Appel's hypergeometric functions. 
Consequently, we expect to deduce an explicit formula of the discrete power function in terms of the hypergeometric $\tau$-functions of the Garnier system in two variables.  This will be reported in detail in a separate paper. 

\subsection*{Acknowledgment}
This research was supported by an Australian Research Council grant \#
DP160101728 and JSPS KAKENHI Grant Numbers JP16H03941,
JP19K14559 and JP17J00092.

\appendix
\section{A multi-dimensionally consistent system of partial difference equations}\label{section:ABS_Garnier}
The system of six multi-affine linear quad-equations is said to have a consistent-around-a-cube (CAC) property \cite{ABS2003:MR1962121,NW2001:MR1869690}, if they are placed on six faces of a cube in such a way that, given initial values on 4 vertices of the cube, the solutions of the equations are uniquely defined at each remaining vertex.
Integrable partial difference equations arise by iterating such consistent quad-equations on neighbouring cubes that tile the lattice $\bbZ^3$.

The concept of the CAC property can be extended to a hypercube.
The system of equations on a hypercube is said to be multi-dimensionally consistent, if the equations on every sub-cube of the hypercube have the CAC property.
Extension to $n$-dimensional hypercubes that are space-filling polytopes in the lattice $\bbZ^n$ leads to a system of integrable partial difference equations in $n$-directions.

In this section, we consider a system of partial difference equations \eqref{Q1_u} and \eqref{H1_u}, which is an extension of the system discussed in \cite{JKMNS2017:MR3741826} to four dimensions. 
This system is reduced to the system of difference equations \eqref{cr:4d} and \eqref{H1} under some conditions. 
The symmetry of the equations \eqref{Q1_u} and \eqref{H1_u} is taken over to the symmetry of the reduced equations. 
Furthermore we describe a birational action of the four Weyl groups $\br{s_0^k,s_1^k,\pi_k}\,(k=0,1,2,3)$, which recovers the system of difference equations  
\eqref{cr:3d} and \eqref{eqn_of_power_fnct}, through the system of equations \eqref{cr:4d} and \eqref{H1} together with the constraints \eqref{similarity} and \eqref{constraint:2}.

\subsection{A multi-dimensionally consistent system of partial difference equations}
Here, we consider a multi-dimensionally consistent system of partial difference equations and its symmetry. 

Let $u:\bbZ^4\mapsto\bbC$ be a function assigning a value to each vertex of the 4-dimensional integer lattice $\bbZ^4$. 
We denote the value of $u$ at the vertex $(l_1,l_2,l_3,l_0)\in\bbZ^4$ by $u_{l_1,l_2,l_3,l_0}$. Suppose $\mu^{(i)}_l\in\bbC\,(l\in\bbZ\,;\,i=0,1,2,3)$ are parameters assigned to each edge of the hypercube. 

We assume that the function $u$ satisfies the relations 
\begin{equation}\label{Q1_u}
\begin{array}{l}
\dfrac{(u_{l_1,l_2,l_3,l_0}+u_{l_1+1,l_2,l_3,l_0}) (u_{l_1,l_2+1,l_3,l_0}+u_{l_1+1,l_2+1,l_3,l_0})}
 {(u_{l_1,l_2,l_3,l_0}+u_{l_1,l_2+1,l_3,l_0}) (u_{l_1+1,l_2+1,l_3,l_0}+u_{l_1+1,l_2,l_3,l_0})}
 =\dfrac{\mu^{(1)}_{l_1}}{\mu^{(2)}_{l_2}},\\[4mm]
\dfrac{(u_{l_1,l_2,l_3,l_0}+u_{l_1,l_2+1,l_3,l_0}) (u_{l_1,l_2,l_3+1,l_0}+u_{l_1,l_2+1,l_3+1,l_0})}
 {(u_{l_1,l_2,l_3,l_0}+u_{l_1,l_2,l_3+1,l_0}) (u_{l_1,l_2+1,l_3+1,l_0}+u_{l_1,l_2+1,l_3,l_0})}
 =\dfrac{\mu^{(2)}_{l_2}}{\mu^{(3)}_{l_3}},\\[4mm]
\dfrac{(u_{l_1,l_2,l_3,l_0}+u_{l_1,l_2,l_3+1,l_0}) (u_{l_1+1,l_2,l_3,l_0}+u_{l_1+1,l_2,l_3+1,l_0})}
 {(u_{l_1,l_2,l_3,l_0}+u_{l_1+1,l_2,l_3,l_0}) (u_{l_1+1,l_2,l_3+1,l_0}+u_{l_1,l_2,l_3+1,l_0})}
 =\dfrac{\mu^{(3)}_{l_3}}{\mu^{(1)}_{l_1}},
\end{array}
\end{equation}
and 
\begin{equation}\label{H1_u}
\begin{array}{ll}
\left(\dfrac{1}{u_{l_1,l_2,l_3,l_0}}+\dfrac{1}{u_{l_1+1,l_2,l_3,l_0}}\right)(u_{l_1,l_2,l_3,l_0+1}+u_{l_1+1,l_2,l_3,l_0+1})
 =-\mu^{(1)}_{l_1}\mu^{(0)}_{l_0},\\[4mm]
\left(\dfrac{1}{u_{l_1,l_2,l_3,l_0}}+\dfrac{1}{u_{l_1,l_2+1,l_3,l_0}}\right)(u_{l_1,l_2,l_3,l_0+1}+u_{l_1,l_2+1,l_3,l_0+1})
 =-\mu^{(2)}_{l_2}\mu^{(0)}_{l_0},\\[4mm]
\left(\dfrac{1}{u_{l_1,l_2,l_3,l_0}}+\dfrac{1}{u_{l_1,l_2,l_3+1,l_0}}\right)(u_{l_1,l_2,l_3,l_0+1}+u_{l_1,l_2,l_3+1,l_0+1})
 =-\mu^{(3)}_{l_3}\mu^{(0)}_{l_0}.
\end{array}
\end{equation}
This list of six equations above corresponds to twenty four $2$-dimensional faces of a $4$-dimensional hypercube which forms a fundamental cell of $\bbZ^4$. 
See Figure \ref{fig:4-cube} for the corresponding labelling assigning coordinates to the vertices of a hypercube and parameters to each direction. 

\begin{figure}[t]
\begin{center}
\includegraphics[width=0.6\textwidth]{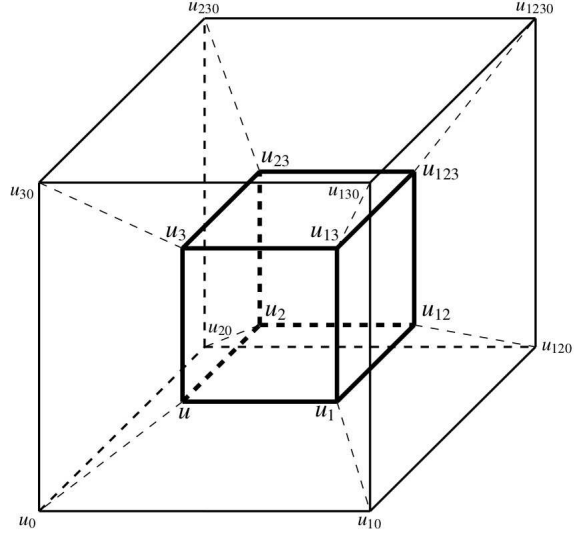}
\end{center}
\caption{
A $4$-dimensional hypercube.
The sixteen variables $u_{l_1,l_2,l_3,l_0}$,\dots, $u_{l_1+1,l_2+1,l_3+1,l_0+1}$ are assigned to the vertices of the hypercube.
Note that $u=u_{l_1,l_2,l_3,l_0}$ and the subscript $i$ 
means $+1$ shift in the $l_i$-direction, e.g. $u_1=u_{l_1+1,l_2,l_3,l_0}$ and $u_{10}=u_{l_1+1,l_2,l_3,l_0+1}$.
Moreover, each of the lattice parameters $\mu_{l_i}^{(i)}$, $i=0,1,2,3$, is assigned to the edge connecting $u$ and $u_i$ and to its parallel edges.
}
\label{fig:4-cube}
\end{figure}

Adler et al. \cite{ABS2003:MR1962121,ABS2009:MR2503862} and Boll \cite{BollR2011:MR2846098,BollR2012:MR3010833,BollR:thesis} classified the six face equations of 3D cube by using the CAC property and additional property called tetrahedron property. 
Every 3D sub-cube for the system of equations \eqref{Q1_u} and \eqref{H1_u} is fit with one in \cite{BollR2011:MR2846098,BollR2012:MR3010833,BollR:thesis}. 
In these papers, each of equations \eqref{Q1_u} is called Q1$_{\de=0}$ while each of equations \eqref{H1_u} is called H1$_{\ep=0}$. 
Note that each of partial difference equations \eqref{Q1_u} is known as the discrete Schwarzian KdV equation\cite{NCWQ1984:MR763123,QNCL1984:MR761644}. 

It is easy to see that the system of equations \eqref{Q1_u} and \eqref{H1_u} is invariant under the transformations $w_i^k~(i=0,1\,;\,k=0,1,2,3)$ and $\varpi_k~(k=0,1,2,3)$ given by 
\begin{equation}\label{eqn:lattic_symmetry_u}
\begin{array}{llll}
w_0^1&:&
u_{l_1,l_2,l_3,l_0}\mapsto u_{2-l_1,l_2,l_3,l_0},&
\mu_l^{(1)}\mapsto\mu_{1-l}^{(1)}\,,\\[1mm]
w_1^1&:&
u_{l_1,l_2,l_3,l_0}\mapsto u_{-l_1,l_2,l_3,l_0},&
\mu_l^{(1)}\mapsto\mu_{-1-l}^{(1)}\,,\\[1mm]
\varpi_1&:&
u_{l_1,l_2,l_3,l_0}\mapsto-u_{1-l_1,l_2,l_3,l_0},&
\mu_l^{(1)}\mapsto\mu_{-l}^{(1)}\,,\\[2mm]
w_0^2&:&
u_{l_1,l_2,l_3,l_0}\mapsto u_{l_1,2-l_2,l_3,l_0},&
\mu_l^{(2)}\mapsto\mu_{1-l}^{(2)}\,,\\[1mm]
w_1^2&:&
u_{l_1,l_2,l_3,l_0}\mapsto u_{l_1,-l_2,l_3,l_0},&
\mu_l^{(2)}\mapsto\mu_{-1-l}^{(2)}\,,\\[1mm]
\varpi_2&:&
u_{l_1,l_2,l_3,l_0}\mapsto-u_{l_1,1-l_2,l_3,l_0},&
\mu_l^{(2)}\mapsto\mu_{-l}^{(2)}\,,\\[2mm]
w_0^3&:&
u_{l_1,l_2,l_3,l_0}\mapsto u_{l_1,l_2,2-l_3,l_0},&
\mu_l^{(3)}\mapsto\mu_{1-l}^{(3)}\,,\\[1mm]
w_1^3&:&
u_{l_1,l_2,l_3,l_0}\mapsto u_{l_1,l_2,-l_3,l_0},&
\mu_l^{(3)}\mapsto\mu_{-1-l}^{(3)}\,,\\[1mm]
\varpi_3&:&
u_{l_1,l_2,l_3,l_0}\mapsto-u_{l_1,l_2,1-l_3,l_0},&
\mu_l^{(3)}\mapsto\mu_{-l}^{(3)},\\[2mm]
w_0^0&:&
u_{l_1,l_2,l_3,l_0}\mapsto\dfrac{1}{u_{l_1,l_2,l_3,-l_0}},&
\mu_l^{(0)}\mapsto\mu_{-1-l}^{(0)}\,,\\[4mm]
w_1^0&:&
u_{l_1,l_2,l_3,l_0}\mapsto\dfrac{1}{u_{l_1,l_2,l_3,-2-l_0}},&
\mu_l^{(0)}\mapsto\mu_{-3-l}^{(0)}\,,\\[4mm]
\varpi_0&:&
u_{l_1,l_2,l_3,l_0}\mapsto-\dfrac{1}{u_{l_1,l_2,l_3,-1-l_0}},&
\mu_l^{(0)}\mapsto\mu_{-2-l}^{(0)}\,. 
\end{array}
\end{equation}
These transformations can be read from 
the setting that the variables $u_{l_1,l_2,l_3,l_0}$ and the parameters $\mu_l^{(i)}$ are respectively assigned to the vetices and edges of the lattice. 
For each of $k\in\{0,1,2,3\}$ the transformations $w_0^k,w_1^k$ and $\varpi_k$ give a realization of the extended affine Weyl group $\widetilde{W}(A_1^{(1)})$, namely they satisfy the fundamental relations 
\begin{equation}
(w_0^k)^2=(w_1^1)^2=(\varpi_k)^2=1,\quad
\varpi_kw_0^k=w_1^k\varpi_k. 
\end{equation}
Furthermore, the four group $\br{w_0^k,w_1^k,\varpi_k}$, $k=0,1,2,3$, commute with each other. 

\subsection{Reduction of the system of equations \eqref{Q1_u} and \eqref{H1_u}}
Let us consider the reduction of \eqref{Q1_u} and \eqref{H1_u} to the system derived from the discrete symmetry of the Garnier system in two variables. 
To this end, we specialize $\mu_l^{(i)}\,(i=0,1,2,3)$ as  
\begin{equation}
\begin{array}{l}
\mu^{(1)}_l=\chi(-\frac{1}{6},-\frac{1}{6},-\frac{1}{2},\frac{1}{2},-\frac{1}{6}),\quad
\mu^{(2)}_l=\chi(-\frac{1}{6},-\frac{1}{6},\frac{1}{2},-\frac{1}{2},-\frac{1}{6}),\\[1mm]
\mu^{(3)}_l=-\chi(-\frac{1}{6},-\frac{1}{6},\frac{1}{2},\frac{1}{2},-\frac{1}{6}),\quad
\mu^{(0)}_l=(\al^0_0+l)\,(\al^0_0+l+1),
\end{array}
\end{equation}
where $\chi$ is given by \eqref{chi} and $\al_0^0$ is a complex parameter. 
Then, the equations \eqref{Q1_u} and \eqref{H1_u} become the equations \eqref{cr:4d} and \eqref{H1}, respectively, by the change of variables 
\begin{equation}
f_{l_1,l_2,l_3,l_0}=(-1)^{l_1+l_2+l_3}\,u_{l_1,l_2,l_3,l_0}.
\end{equation}
Recall that the difference equations \eqref{cr:4d} and \eqref{H1} are parts of the system derived from the discrete symmetry of the Garnier system in two variables. 
This system also contains the constraints \eqref{constraint:2} and 
\begin{equation}\label{constraint:1}
\begin{split}
&\dfrac{\al_1^1+\al_1^2+\al_1^3-\al_0^4+\al_0^0-l_1-l_2-l_3+l_0}{2}\\
&\quad=(\al_1^1-l_1)\,
\dfrac{f_{l_1,l_2,l_3,l_0}-f_{l_1-1,l_2,l_3,l_0}}{f_{l_1+1,l_2,l_3,l_0}-f_{l_1-1,l_2,l_3,l_0}}\\
&\qquad+(\al_1^2-l_2)\,
\dfrac{f_{l_1,l_2,l_3,l_0}-f_{l_1,l_2-1,l_3,l_0}}{f_{l_1,l_2+1,l_3,l_0}-f_{l_1,l_2-1,l_3,l_0}}\\
&\qquad+(\al_1^3-l_3)\,
\dfrac{f_{l_1,l_2,l_3,l_0}-f_{l_1,l_2,l_3-1,l_0}}{f_{l_1,l_2,l_3+1,l_0}-f_{l_1,l_2,l_3-1,l_0}},\\
&\dfrac{\al_1^1+\al_1^2+\al_1^3-\al_0^4-\al_0^0-l_1-l_2-l_3-l_0}{2}\,f_{l_1,l_2,l_3,l_0}\\
&\quad=(\al_1^1-l_1)\,
\dfrac{f_{l_1,l_2,l_3,l_0}-f_{l_1-1,l_2,l_3,l_0}}{f_{l_1+1,l_2,l_3,l_0}-f_{l_1-1,l_2,l_3,l_0}}\,
f_{l_1+1,l_2,l_3,l_0}\\
&\qquad+(\al_1^2-l_2)\,
\dfrac{f_{l_1,l_2,l_3,l_0}-f_{l_1,l_2-1,l_3,l_0}}{f_{l_1,l_2+1,l_3,l_0}-f_{l_1,l_2-1,l_3,l_0}}\,
f_{l_1,l_2+1,l_3,l_0}\\
&\qquad+(\al_1^3-l_3)\,
\dfrac{f_{l_1,l_2,l_3,l_0}-f_{l_1,l_2,l_3-1,l_0}}{f_{l_1,l_2,l_3+1,l_0}-f_{l_1,l_2,l_3-1,l_0}}\,
f_{l_1,l_2,l_3+1,l_0}, 
\end{split}
\end{equation}
where $\al_1^i$ $(i=1,2,3)$ and $\al_0^j$ $(j=0,4)$ are also complex parameters. 
Note that the constraints \eqref{constraint:1} directly come from Lemma \ref{Lemma:constraint}. 
By using the partial difference equations \eqref{Q1_u} and \eqref{H1_u} together with the constraints \eqref{constraint:2} and \eqref{constraint:1}, one can determine all the values of the variables $f_{l_1,l_2,l_3,l_0}$ from a finite number of initial values. 
For instance, one can choose 
\begin{equation}
 f_0=f_{0,0,0,0},\quad
 f_{\pm 1}=f_{\pm 1,0,0,0},\quad
 f_{\pm 2}=f_{0,\pm 1,0,0},\quad
 f_{\pm 3}=f_{0,0,\pm 1,0}
\end{equation}
as the initial values, where these seven values satisfy the relations 
\begin{equation}\label{eqn:initial_f}
\begin{split}
 &\dfrac{\al_1^1+\al_1^2+\al_1^3-\al_0^4+\al_0^0}{2}
 =\al_1^1\,\dfrac{f_0-f_{-1}}{f_{1}-f_{-1}}
 +\al_1^2\,\dfrac{f_0-f_{-2}}{f_{2}-f_{l-2}}
 +\al_1^3\,\dfrac{f_0-f_{-3}}{f_{3}-f_{-3}},\\
 &\dfrac{\al_1^1+\al_1^2+\al_1^3-\al_0^4-\al_0^0}{2}\,f_0
 =\al_1^1\,\dfrac{f_0-f_{-1}}{f_{1}-f_{-1}}\,f_{1}
 +\al_1^2\,\dfrac{f_0-f_{-2}}{f_{2}-f_{-2}}\,f_{2}
 +\al_1^3\,\dfrac{f_0-f_{-3}}{f_{3}-f_{-3}}\,f_{3}
\end{split}
\end{equation}
obtained from the constraints \eqref{constraint:1}. 

The symmetry of the system of equations \eqref{Q1_u} and \eqref{H1_u} described by $\br{w_0^k,w_1^k,\varpi_k}$, $k=0,1,2,3$, is taken over to that of the reduced system discussed in \S\ref{subsection:Restriction} by the following isomorphism:
\begin{equation}
w_1^k\mapsto s_1^k,\quad
w_0^k\mapsto s_0^k,\quad
\varpi_k\mapsto\pi_k\qquad(k=0,1,2,3).
\end{equation}
From the actions \eqref{eqn:lattic_symmetry_u} we obtain 
\begin{equation}
\begin{array}{lllllll}
s_0^1&:&
f_{l_1,l_2,l_3,l_0}\mapsto f_{2-l_1,l_2,l_3,l_0},&&
s_1^1&:&
f_{l_1,l_2,l_3,l_0}\mapsto f_{-l_1,l_2,l_3,l_0},\\[1mm]
\pi_1&:&f_{l_1,l_2,l_3,l_0}\mapsto f_{1-l_1,l_2,l_3,l_0},\\[2mm]
s_0^2&:&
f_{l_1,l_2,l_3,l_0}\mapsto f_{l_1,2-l_2,l_3,l_0},&&
s_1^2&:&f_{l_1,l_2,l_3,l_0}\mapsto f_{l_1,-l_2,l_3,l_0},\\[1mm]
\pi_2&:&f_{l_1,l_2,l_3,l_0}\mapsto f_{l_1,1-l_2,l_3,l_0},\\[2mm]
s_0^3&:&
f_{l_1,l_2,l_3,l_0}\mapsto f_{l_1,l_2,2-l_3,l_0},&&
s_1^3&:&f_{l_1,l_2,l_3,l_0}\mapsto f_{l_1,l_2,-l_3,l_0},\\[1mm]
\pi_3&:&f_{l_1,l_2,l_3,l_0}\mapsto f_{l_1,l_2,1-l_3,l_0},
\end{array}
\end{equation}
and 
\begin{equation}
\begin{array}{lllll}
s_0^0&:&f_{l_1,l_2,l_3,l_0}\mapsto\dfrac{1}{f_{l_1,l_2,l_3,-l_0}},&
\al_0^0\mapsto-\al_0^0,
\\[4mm]
s_1^0&:&f_{l_1,l_2,l_3,l_0}\mapsto\dfrac{1}{f_{l_1,l_2,l_3,-2-l_0}},&
\al_0^0\mapsto2-\al_0^0,
\\[4mm]
\pi_0&:&f_{l_1,l_2,l_3,l_0}\mapsto\dfrac{1}{f_{l_1,l_2,l_3,-1-l_0}},&
\al_0^0\mapsto 1-\al_0^0,
\end{array}
\end{equation}

Let us introduce the auxiliary parameters $\al_0^i$ $(i=1,2,3)$ and $\al_1^j$ $(j=0,4)$ by
\begin{equation}
\al_0^k=1-\al_1^k,\quad k=0,1,2,3,4.
\end{equation}
Since we impose the constraints \eqref{constraint:2} and \eqref{constraint:1}, one can formulate the birational action of $\br{s_0^k,s_1^k,\pi_k}$, $k=0,1,2,3$, on the field $\mathbb{K}(f_0,f_{\pm1},f_{\pm2},f_{\pm3})$ of rational functions with the coefficient field $\mathbb{K}=\bbC(\{\al_i^j\})$. 
Recall that the action on the coefficient field $\mathbb{K}$ is given by \eqref{act_s:parameters} and \eqref{act_pi:parameters}. 
It is easy to see that the transformations $s_1^k$ $(k=1,2,3)$ and $s_0^0$ are given by 
\begin{equation}
\begin{array}{lll}
s_1^k(f_k)=f_{-k},\quad s_1^k(f_{-k})=f_k&(k=1,2,3),\\[2mm]
s_0^0(f_k)=\dfrac{1}{f_k}&(k=0,\pm1,\pm2,\pm3). 
\end{array}
\end{equation}
The description of the transformations $\pi_k\,(k=1,2,3,0)$ is a little bit complicated. 
Regarding the transformation $\pi_1$, we have $\pi_1(f_0)=f_1,\pi_1(f_1)=f_0$ and 
\begin{equation}\label{act_of_pi1}
\begin{array}{l}
\dfrac{f_0-f_{-3}}{f_{-3}-\pi_1(f_{-3})}
\dfrac{\pi_1(f_{-3})-f_1}{f_1-f_0}=1-t_1,\quad
\dfrac{f_0-f_3}{f_3-\pi_1(f_3)}
\dfrac{\pi_1(f_3)-f_1}{f_1-f_0}=1-t_1,\\[5mm]
\dfrac{f_0-f_{-1}}{f_{-1}-\pi_1(f_{-2})}
\dfrac{\pi_1(f_{-2})-f_{-2}}{f_{-2}-f_0}=\dfrac{1-t_2}{1-t_1},\quad
\dfrac{f_0-f_1}{f_1-\pi_1(f_2)}
\dfrac{\pi_1(f_2)-f_2}{f_2-f_0}=\dfrac{1-t_2}{1-t_1}, 
\end{array}
\end{equation}
which come from the equations \eqref{Q1_u}. 
By solving each equations of \eqref{act_of_pi1} for $\pi_1(f_{-3})$, $\pi_1(f_3)$, $\pi_1(f_{-2})$ and $\pi_1(f_2)$, we see that these are expressed by the rational functions of $f_0,f_{\pm1},f_{\pm2},f_{\pm3}$. 
From the first relation of \eqref{eqn:initial_f} we also get
\begin{equation}
\dfrac{\al_0^1+\al_1^2+\al_1^3-\al_0^4+\al_0^0}{2}
=\al_0^1\,\dfrac{f_1-\pi_1(f_{-1})}{f_0-\pi_1(f_{-1})}
+\al_1^2\,\dfrac{f_1-\pi_1(f_{-2})}{f_2-\pi_1(f_{-2})}
+\al_1^3\,\dfrac{f_1-\pi_1(f_{-3})}{f_3-\pi_1(f_{-3})}, 
\end{equation}
which gives the expression of $\pi_1(f_{-1})$ by the rational functions of $f_0,f_{\pm1},f_{\pm2},f_{\pm3}$.
In a similar way, $\pi_k(f_i)$ $(k=2,3\,;\,i\in\{0,\pm1,\pm2,\pm3\})$ can be expressed in terms of the rational functions of $f_0,f_{\pm1},f_{\pm2},f_{\pm3}$. 
Let us consider the transformation $\pi_0$. 
From the first relation of \eqref{constraint:2}, we have 
\begin{equation}
\pi_0(f_0)=
\widetilde{\chi}\,\al_0^0\left(
\dfrac{\al_1^1(t_2-1)}{(t_1-1)(f_1-f_{-1})}
+\dfrac{\al_1^2}{f_2 - f_{-2}}
-\dfrac{\al_1^3(t_2-1)}{f_3-f_{-3}}\right),
\end{equation}
where $\widetilde{\chi}$ is given by \eqref{eqn:wide_chi}.
One can compute $\pi_0(f_k)$ and $\pi_0(f_{-k})$ for $k=1,2,3$ by $\pi_0(f_k)=\pi_0\pi_k(f_0)=\pi_k\pi_0(f_0)$ and $\pi_0(f_{-k})=\pi_0s_1^k(f_k)=s_1^k\pi_0(f_k)$, respectively. 
Then, we obtain the expression of $\pi_0(f_i)\,(i\in\{0,\pm1,\pm2,\pm3\})$ by the rational functions of $f_0,f_{\pm1},f_{\pm2},f_{\pm3}$. 
The transformations $s_0^k$ $(k=1,2,3)$ and $s_1^0$ are compositions of the above transformations, namely we have 
\begin{equation}
s_0^k=\pi_ks_1^k\pi_k\quad(k=1,2,3),\qquad
s_1^0=\pi_0s_0^0\pi_0. 
\end{equation}
We can verify that the action of $\br{s_0^k,s_1^k,\pi_k}$, $k=0,1,2,3$, constructed here is equivalent to that in Section \ref{subsection:Restriction}. 

\def\cprime{$'$} 

\end{document}